\newtheorem{theorem}{Theorem}
\newtheorem{lemma}{Lemma}
\newtheorem{definition}{Definition}
\newtheorem{corollary}{Corollary}
\newtheorem{observation}{Observation}
\theoremstyle{definition}
\newcommand{\ncnod}{\textbf{NC4}}
\newcommand{\apdh}{\textsc{Acyclic B-Hyperpath Existence in a 1-2-Hypergraph}}
\newcommand{\minapdh}{\textsc{Min Acyclic B-Hyperpath in a 1-2-Hypergraph}}
\newcommand{\minapdhprom}{\textsc{Min Acyclic B-Hyperpath in a B-Connected 1-2-Hypergraph}}
\newcommand{\closureprob}{\textsc{Inconsistent Rooted Triple Set Closure}}
\newcommand{\entailprob}{\textsc{Inconsistent Rooted Triple Set Entailment}}
\newcommand{\arcs}{\text{arcs}}
\newcommand{\arc}{\text{arc}}
\newcommand{\triples}{\text{triples}}
\newcommand{\triple}{\text{triple}}
\newif\ifwabi
\DeclareMathSymbol{\mlq}{\mathord}{operators}{``}
\DeclareMathSymbol{\mrq}{\mathord}{operators}{`'}
\newcommand\niton{\mathrel{\m@th\mathpalette\canc@l\owns}}
\newcommand\canc@l[2]{{\ooalign{$\hfil#1/\mkern1mu\hfil$\crcr$#1#2$}}}
\begin{document}

\title{Deciding the Closure of Inconsistent Rooted Triples is NP-Complete}

\author{Matthew P. Johnson\thanks{Department of Computer Science, Lehman College and Ph.D. Program in Computer Science, The Graduate Center, City University of New York}
}

\maketitle

\begin{abstract}

Interpreting three-leaf binary trees or {\em rooted triples} as constraints yields an entailment relation, whereby binary trees satisfying some rooted triples must also thus satisfy others, and thence a closure operator, which is known to be polynomial-time computable. This is extended to inconsistent triple sets by defining that a  triple is entailed by such a set if it is entailed by any consistent subset of it.

Determining whether the closure of an inconsistent rooted triple set can be computed in polynomial time was posed as an open problem in the Isaac Newton Institute's ``Phylogenetics'' program in 2007. It appears (as \ncnod) in a collection of such open problems maintained by Mike Steel, and it is the last of that collection's five problems concerning computational complexity to have remained open.
We resolve the complexity of computing this closure, proving that its decision version is NP-Complete.

In the process, we also prove that detecting the existence of {\em any} acyclic B-hyperpath (from specified source to destination) is NP-Complete, in a significantly narrower special case than the version whose {\em minimization} problem was recently proven NP-hard by 
Ritz et al. This implies it is NP-hard to approximate
(our special case of) their minimization problem to within {\em any} factor.
%
%
%
%
%
%
%
%
%
%
\end{abstract}


\section{Introduction}

We investigate the computational complexity of a problem in which, based on a given collection of relationships holding between the leaves of a hypothetical (rooted) binary tree $T$, the task is to infer whatever additional relationships (of the same form) must also hold between $T$'s leaves as a consequence. Various problems in phylogenetic tree reconstruction involve inference of this kind. The specific relationship form in question here, obtaining between some three leaves $p,q,o$ and denoted $pq|o$, is that of the {\em path between $p$ and $q$} being node-disjoint from the {\em path between $o$ and the root}, or equivalently, of the {\em lowest common ancestor (lca) of $p$ and $q$} not being an ancestor of $o$. This relationship is modeled as a {\em rooted triple}, i.e., the (rooted, full) binary tree on leaves $p,q,o$ in which $p$ and $q$ are siblings, and their parent and $o$ are both children of the root.
Then $pq|o$ holding in $T$ is equivalent to having the {\em subtree of $T$ induced by $p,q,o$} be homeomorphic to $pq|o$'s corresponding three-leaf binary tree.

The problem of computing the set of all rooted triples entailed by a given triple set $R'$ (its {\em closure} $\overline{R'}$) is known to be polynomial-time computable by, e.g., Aho et al.'s BUILD algorithm \cite{bryant1995extension,aho1981inferring} if $R'$ is {\em consistent}, i.e., if there exists a binary tree satisfying all triples in $R'$.

If a rooted triple set $R$ is inconsistent, then a given triple is said to be entailed by $R$ if it is entailed by {\em any} consistent subset $R' \subset R$. That is, the closure $\overline R$ equals the union of the closures of all $R$'s {\em consistent} subsets. Thus the naive brute-force algorithm for computing $\overline R$ suggested by the definition is exponential-time in $|R|$.

Determining the complexity of the problem of computing $\overline R$ was posed in the Isaac Newton Institute's ``Phylogenetics'' program in 2007 \cite{philogenfinreport}, and it appears (as \ncnod) in a collection of such open problems maintained by Mike Steel \cite{steel2010challenges}. That collection's other four problems concerning computational complexity
were all solved by 2009 or 2010, but \ncnod\ has remained open.
We resolve the complexity of computing $\overline R$, proving that it is NP-hard. In particular, we prove that its decision version, i.e., deciding whether a given rooted triple is entailed by $R$, is NP-Complete.

%

In the process, we also obtain stronger hardness results for a problem concerning {\em acyclic B-hyperpaths}, a directed hypergraph problem that has recently been applied to another computational biology application, but interestingly one unrelated to phylogenetic trees and rooted triples: {\em signaling pathways}, the sequences of chemical reactions through which cells respond to signals from their environment (see Ritz et al.~\cite{ritz2015pathway}).

Specifically, we prove that detecting the existence of {\em any} acyclic B-hyperpath (between specified source and destination) is NP-Complete, in a significantly narrower special case (viz., the case in which every hyperarc has one tail and two heads) than the version whose {\em minimization} problem was recently proven NP-hard by Ritz et al. This immediately implies it is NP-hard to approximate (our special case of) their minimization problem to within {\em any} factor. Moreover, even if we restrict ourselves to feasible problem instances (i.e., those for which there exists at least one such acyclic B-hyperpath), we show that this ``promise problem'' \cite{goldreich2006promise} special case is NP-hard to approximate to within factor $|V|^{1-\epsilon}$ for all $\epsilon>0$.



\vskip .1cm
\noindent \textbf{Related work.}
Inference of new triples from a given set of rooted triples holding in a binary tree was studied by Bryant and Steel \cite{bryant1995extension,bryant1997building}, who proved many results on problems involving rooted triples, as well as quartets, and defined the closure of an inconsistent triple set. The polynomial-time BUILD algorithm of Aho et al.~\cite{aho1981inferring} (as well as subsequent extensions and speedups) can be used to construct a tree satisfying all triples in $R$ (and to obtain the closure $\overline R$), or else to conclude than none exists.

Gallo et al.~\cite{gallo1993directed} defined a number of basic concepts involving paths and cycles in directed hypergraphs, including B-connectivity. Ausiello et al.~\cite{ausielloyz1992optimal} studied path and cycle problems algorithmically in directed hypergraphs and showed, via a simple reduction from \textsc{Set Cover}, that deciding whether there exists a B-hyperpath from specified source to destination with $\le \ell$ hyperarcs is NP-Complete. 

Ritz et al.~\cite{ritz2015pathway} recently studied a problem involving ``signaling hypergraphs'', which are directed hypergraphs that can contain ``hypernodes''. They modify Ausiello et al.'s hardness reduction from \textsc{Set Cover} to show that deciding the existence of a length$\le$$\ell$ B-hyperpath is NP-Complete already in the special case of directed hypergraphs each of whose hyperarcs has at most 3 head nodes and at most 3 tail nodes (due to \textsc{Set Cover} becoming hard once sets of size 3 are permitted). Ritz et al.'s hardness proof actually does not use the fact that their problem formulation requires the computed B-hyperpath to be acyclic. Because the entire directed hypergraph they construct is (like Ausiello et al.'s) always acyclic, their proof provides hardness regardless of whether the formulation includes an acyclicity constraint. This constraint is essential to our hardness proof, however, so our result does not rule out the possibility that a B-hyperpath minimization problem formulation {\em without an acyclicity requirement} would be easier to approximate.




%

\section{Preliminaries}

\subsection{Rooted Triples}

\begin{definition}
For any nodes $u,v$ of a rooted binary tree (or simply a {\em tree}):
\begin{itemize}
\item $v {}\le{} u$ denotes that $v$ is a {\em descendent} of $u$ (and $u$ is an {\em ancestor} of $v$), 
i.e., $u$ appears on the path from $v$ to the root; 
$v < u$ denotes that $v$ is a {\em proper descendent} of $u$ (and $u$ is a {\em proper ancestor} of $v$), i.e., $v \le u$ and $v \ne u$.

\item $uv$
denotes their {\em lowest common ancestor (lca)}, i.e., the node $w$ of maximum distance from the root that satisfies $w \ge u$ and $w \ge v$.
\end{itemize}

\end{definition}

\begin{definition}
\begin{itemize}
\ifwabi
\addtolength{\itemindent}{.2em}
\else
\addtolength{\itemindent}{-1.2em}
\fi
\item A {\em rooted triple} (or simply a {\em triple}) $t = (\{p,q\},o) \in {L \choose 2} \times L$ (with $p,q,o$ all distinct, for an underlying finite leaf set $L$) is denoted by the shorthand notation $pq|o$ and represents the constraint: 
{\em the path from $p$ to $q$ is node-disjoint from the path from $o$ to the root.}

\ifwabi
\addtolength{\itemindent}{-.2em}
\else
\addtolength{\itemindent}{1.2em}
\fi
\item The {\em left-hand side (LHS)} of a triple $pq|o$ is $pq$, and its {\em right-hand side (RHS)} is $o$.

\item $L(T)$ denotes the set of leaves of a tree $T$, and $L(R')$ denotes the set of leaves appearing in any of the triples within a set $R'$, i.e., $L(R') = \bigcup_{pq|o \in R'} \{p,q,o\}$.

\item A tree $T$ with $p,q,o \in L(T)$ {\em displays} the triple $pq|o$ (or, $pq|o$ {\em holds} in $T$) if the corresponding constraint holds in $T$. The set of all triples displayed by $T$ is denoted by $r(T)$.
%
The set of all trees that display {\em all} triples in $R'$ is denoted by $\langle R' \rangle$.
A set of triples $R'$ is {\em consistent} if $\langle R' \rangle$ is nonempty.
\end{itemize}
\end{definition}
\begin{definition}
\begin{itemize}
\ifwabi
\addtolength{\itemindent}{.2em}
\else
\addtolength{\itemindent}{-1.2em}
\fi
\item For a consistent triple set $R'$, a given triple $t$ (which may or may not be a member of $R'$)  is {\em entailed} by $R'$, denoted $R' \vdash t$,
if every tree displaying all the triples in $R'$ also displays $t$, i.e., if $t$ is displayed by every tree in $\langle R' \rangle$. The {\em closure} $\overline{R'}$ is the set of all triples entailed by $R'$, i.e., $\overline{R'} = \{t : R' \vdash t\}$, which can also be defined as $\overline{R'} = \bigcap_{T \in \langle R' \rangle} r(T)$ \cite{bryant1995extension}.
\ifwabi
\addtolength{\itemindent}{-.2em}
\else
\addtolength{\itemindent}{1.2em}
\fi
\item For an inconsistent triple set $R$, a given triple $t$ (which may or may not be a member of $R$) is {\em entailed} by $R$, again denoted $R \vdash t$, if there exists a consistent subset $R' \subset R$ that entails $t$. The closure $\overline{R}$ is again the set of all triples entailed by $R$, or equivalently the union, taken over every consistent subset $R' \subset R$, of $\overline{R'}$, i.e., $\bigcup_{\text{cons.\:}R' \subset R} \overline{R'}$. 
\end{itemize}
\end{definition}

\begin{table}[t!]
\caption{Variable name conventions, many of which (also) represent leaves in the triple set $R$ constructed in the reduction. Note that the notation $pq$ (for leaves $p,q$) is used to denote both $\text{lca}(p,q)$ {\em and} the hypergraph node whose outgoing hyperarcs represent triples of the form $pq|o$, i.e., those constraining $\text{lca}(p,q)$ from above.}
\centering 

\ifwabi
\begin{tabular}{c @{\hskip .35cm}  l @{\hskip .1cm} c}
\else
\begin{tabular}{c  l  c}
\fi

\toprule
$p,q,p',q',o,o'$ & generic leaf variables, especially in triples' LHSs or RHSs (resp.)	& (leaves)\\
$b_i,b'_i,c_j,d_j$, etc. & particular leaf names	& (leaves)\\
$pq$, etc.	&	lowest common ancestor $\text{lca}(p,q)$ of leaves $p,q$		& (leaf 2-sets)\\
$\alpha,\beta,\gamma$	& leaves of target triple $\alpha\beta | \gamma$	& (leaves)\\
$t$			& \multicolumn{2}{l}{rooted triple, especially of form $p_kq_k|o_k = u_k|o_k$}\\
$R$ or $R'$	& \multicolumn{2}{l}{set of triples, especially inconsistent or consistent (resp.)}\\
$L$ or $L(R)$	& set of leaves or set of leaves appearing in members of $R$ (resp.)			& (leaf sets)\\
$u,u_k,v,v',v_k,v_k'$	& hypergraph nodes, especially tail node or head nodes (resp.)		& (leaf 2-sets)\\
$pq$, etc.	&	hypergraph node corresponding to leaves $p,q$						& (leaf 2-sets)\\
$\alpha\beta, ~c_{m+1}\gamma$		& source and destination nodes (resp.)				& (leaf 2-sets)\\
$a_k$	&	\multicolumn{2}{l}{1-2-hyperarc, especially of form $u_k{\to}\{v_k,v_k'\} = p_kq_k{\to}\{p_ko_k,q_ko_k\}$, with}\\
	& \multicolumn{2}{l}{~\:$k \in [\ell] = \{1,...,\ell\}$ indicating $a_k$'s position in a path $P$ of length $|P|=\ell$}\\
$x_i$	& $i$th SAT variable, with $i \in [n]$\\
$C_j$	& $j$th SAT clause, with $j \in [m]$\\
$x_i, \bar x_i$ or $\tilde x_i$	& literals (positive, negative or either, resp.) of $x_i$\\
$x_i^j, \bar x_i^j$ or $\tilde x_i^j$	& the appearance (positive, negative or either, resp.) of $x_i$ in $C_j$	& (leaves)\\
$x_{\hat w}^j, \bar x_{\hat w}^j$ or $\tilde x_{\hat w}^j$	& the $w$th variable appearance in $C_j$	& (leaves)\\
$x_{\hat \cdot}^j, \bar x_{\hat \cdot}^j$ or $\tilde x_{\hat \cdot}^j$	& {\em some} (unspecified) variable appearance in $C_j$	& (leaves)\\
$y_i^j, \bar y_i^j$	& helper leaves in $x_i$ gadget for $x_i^j$ and $\bar x_i^j$ (resp.)	& (leaves)\\
$\tilde z_i^j$	& $j$th element in sequence $b_j,b'_j,\tilde x_i^1,\tilde y_i^1 ..., \tilde x_i^m, \tilde y_i^m,b_{j+1},b'_{j+1}$	& (leaves)\\
$F$			& SAT formula\\
\bottomrule
\end{tabular}
\label{tbl:varnames}
\end{table}

We first state a few immediate consequences of these definitions.

\begin{observation}
\begin{itemize}
\ifwabi
\addtolength{\itemindent}{.2em}
\else
\addtolength{\itemindent}{-1.2em}
\fi
\item It can happen that $pp'=qq'$ even if $\{p,p'\} \cap \{q,q'\} = \varnothing$.

\ifwabi
\addtolength{\itemindent}{-.2em}
\else
\addtolength{\itemindent}{1.2em}
\fi
\item In any given tree $T$ having $p,q,o \in L(T)$, exactly one of $pq|o$, $po|q$, and $qo|p$ holds.

\item $pq|o$ ~iff~ $qp|o$ ~iff~ (path: $p$ to $q$) $\cap$ (path: $o$ to the root) $= \varnothing$ ~iff~ $pq < po = qo$.

\item Equivalently, the {\em 3-point condition} for {\em ultrametrics} \cite{semple2003phylogenetics} holds: for all $p,q,o \in L(T)$, we have $pq < po = qo$ ~or~ $oq < op = qp$ ~or~ $op < oq = pq$.


\item Regardless of whether triple set $R$ is consistent, its closure $\overline{R}$ satisfies $R \subseteq \overline{R} \subseteq {L \choose 2} \times L$, and so $|\overline{R}| = O(|L|^3)$.
\end{itemize}
\end{observation}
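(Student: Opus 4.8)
The plan is to dispatch the five bullets in order, observing that four of them are restatements of a single structural fact about the lowest common ancestor in a rooted binary tree, so the real work is concentrated in the second and third bullets. For the first bullet I would simply exhibit a witness: let $T$ have a root with two children $A,B$, and hang $p,q$ below $A$ and $p',q'$ below $B$. Then $pp' = qq' = {}$root, while $\{p,p'\}\cap\{q,q'\}=\varnothing$, which is all an existence claim requires.

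The heart of the argument is the second and third bullets, which I would derive from one picture. Fix distinct leaves $p,q,o$ and let $w$ be the lca of all three. Since $T$ is binary, $w$ has exactly two child subtrees, and because $w$ is the \emph{lowest} common ancestor of all three, the leaves cannot all lie in one child subtree; hence they split $2$--$1$ across the children of $w$. Whichever pair shares a child subtree has its lca strictly below $w$, while each of the other two pairs has lca exactly $w$. Reading off the outgroup as the lone leaf shows that precisely one of the three topologies $pq|o$, $po|q$, $qo|p$ is realized, giving the ``exactly one'' claim of the second bullet. For the third bullet, $pq|o$ iff $qp|o$ is immediate because the LHS is the unordered pair $\{p,q\}\in\binom{L}{2}$; the middle equivalence with path-disjointness is the definition of the constraint; and the equivalence with $pq < po = qo$ falls out of the split, since $p,q$ sharing a child subtree of $w$ forces their lca $pq$ strictly below $w = po = qo$, while conversely $pq < w$ forces $p,q$ into a common child subtree and leaves $o$'s path to the root disjoint from the $p$-to-$q$ path.

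The fourth bullet is then a corollary: combining ``exactly one triple holds'' with the lca characterization applied to each of the three topologies yields exactly one of the three inequalities $pq<po=qo$, $oq<op=qp$, $op<oq=pq$, which in particular gives the stated disjunction (the $3$-point condition). For the fifth bullet, $R\subseteq\overline R$ holds because any single triple $t$ is consistent and $\{t\}\vdash t$, so every $t\in R$ lies in $\overline R$; and $\overline R\subseteq\binom{L}{2}\times L$ because every triple is by definition an element of $\binom{L}{2}\times L$, whence $|\overline R|\le\binom{|L|}{2}\,|L| = O(|L|^3)$.

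I expect no serious obstacle, as these are immediate consequences of the definitions; the only place demanding care is making the $2$--$1$ split argument precise enough to simultaneously deliver the trichotomy of the second bullet and the lca equality $po=qo$ of the third, since everything else reduces to that one structural observation.
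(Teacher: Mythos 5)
Your proposal is correct, and it matches the paper's treatment: the paper states this Observation without proof as a list of "immediate consequences of the definitions," and your arguments (the two-subtree witness for the first bullet, the 2--1 split of $p,q,o$ across the children of their common lca for the trichotomy and the $pq < po = qo$ characterization, and the consistency of any singleton $\{t\}$ for $R \subseteq \overline{R}$) are exactly the standard justifications the paper leaves implicit. No gaps.
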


We state the problem formally.
\vskip .15cm

\noindent
{\bfseries{\scshape{\closureprob}}}\\
\textsc{Instance}: An inconsistent rooted triple set $R$.\\
\textsc{Solution}: $R$'s closure $\overline R = \{t : R \vdash t\}$.

By the observation above, computing the closure is equivalent to solving the following decision problem for each of the $O(|L|^3)$ triples $t \in {L \choose 2} \times L$.
\vskip .15cm


\noindent
{\bfseries{\scshape{\entailprob}}}\\
\textsc{Instance}: An inconsistent rooted triple set $R$ and a rooted triple $t$.\\
\textsc{Question}: Does $R \vdash t$, i.e., does there exists a consistent triple set $R' \subset R$ satisfying $R' \vdash t$?
\vskip .15cm

Although there is no finite set of inference rules that are complete \cite{bryant1995extension}, there are only three possible inference rules inferring from two triples \cite{bryant1995extension}.

\begin{definition}\label{def:2ord}
The three {\em dyadic inference rules} ($\forall\:p,q,o,p',o' \in L$)  are:
\begin{eqnarray}
\{pq|o, ~ qp'|o\} &\vdash& pp'|o\nonumber\\
\{pq|o, ~ qo|o'\} &\vdash& \{pq|o', ~ po|o'\}\label{eq:dyadic}\\
\{pp'|o, ~ oo'|p\} &\vdash& \{pp'|o', ~ oo'|p'\}\nonumber
\end{eqnarray}
\end{definition}

%

A type of graph (distinct from hypergraphs discussed below) that will be used in the hardness proof is the {\em Ahograph} \cite{aho1981inferring}, which is defined for a given triple set $R$ and leaf set $L$.\footnote{We choose to define the Ahograph as a multigraph whose edges each have exactly one label, rather than the more common definition as a graph whose edges each have a {\em set} of labels.}

\begin{definition}
For a given triple set $R$ and leaf set $L$, the {\em Ahograph} $[R,L]$ is the following undirected {\em edge-labeled} graph:
\begin{itemize}
\item its vertex set equals $L$;
\item for every triple $pq|o \in R$, if $p,q,o \in L$, then there exists an $\{p,q\}$ with label $o$.
\end{itemize}

For a hypergraph $(V,A)$,  the {\em corresponding Ahograph} is the Ahograph $[\triples(A),V]$.

To avoid confusion with the nodes of the hypergraph, we refer to the Ahograph's nodes and edges 
as {\em A-nodes} and {\em A-edges}.
\end{definition}

\subsection{Directed Hypergraphs}

Definitions of paths and cycles in hypergraphs are subtler and more complicated than the corresponding definitions for graphs (see \cite{nielsen2001remark}).
We adopt versions of Gallo et al.\ \cite{gallo1993directed}'s definitions,
simplified for the special case in which every hyperarc has exactly one tail and two heads.

\begin{definition}
A {\em 1-2-directed hypergraph} (or simply {\em hypergraph}) $H=(V,A)$ consists of a set of nodes $V$ and a set of 1-2-hyperarcs $A$. A 1-2-hyperarc (or 1-2-directed hyperedge\footnote{Called a $2$-directed F-hyperarc in \cite{thakur2009linear}, extending definitions introduced by Gallo et al.\ \cite{gallo1993directed}.}, or simply {\em hyperarc} or {\em arc}) is an ordered pair $a = (u,\{v,v'\}) \in V \times {V \choose 2}$, with $u,v,v'$ all distinct, which we denote by $u{\to}\{v,v'\}$.
Let $\text{t}(a)=u$ be $a$'s {\em tail} and $\text{h}(a)=\{v,v'\}$ be $a$'s {\em heads}. A node with out-degree 0 is a {\em sink}.
%
%
\end{definition}

\begin{definition}
\begin{itemize}
\ifwabi
\addtolength{\itemindent}{.2em}
\else
\addtolength{\itemindent}{-1.2em}
\fi
\item A {\em simple path} from $u_0$ to $u_\ell$ is a sequence of distinct 1-2-hyperarcs $P=(a_1, ..., a_\ell)$, where $u_0 = \text{t}(a_1)$, $u_\ell \in \text{h}(a_\ell)$ and $\text{t}(a_{k+1}) \in \text{h}(a_k)$ for all $k \in [\ell-1]$.
The length $|P|=\ell$ is the number of arcs.
 
\ifwabi
\addtolength{\itemindent}{-.2em}
\else
\addtolength{\itemindent}{1.2em}
\fi
\item A {\em cycle} is a simple path having $\text{h}(a_\ell) \ni \text{t}(a_1)$. 
An arc $a_k \in P$ having one of its heads be the tail of some earlier arc $a_{k'}$ of $P$, i.e., where $\exists a_{k'} \in P:~  k'<k \text{ and } \text{h}(a_k) \ni \text{t}(a_{k'})$, is a {\em back-arc}.
A simple path is {\em cycle-free} or {\em acyclic} if it has no back-arcs, and is {\em cyclic} otherwise. More generally, a set $A' \subseteq A$ is {\em cyclic} if it is a superset of some cycle, and {\em acyclic} otherwise.
\end{itemize}
\end{definition}



%
%
%

\begin{definition}
In {\em general} directed hypergraphs (i.e., with no restrictions on arcs' numbers of heads and tails), a node $v$ is {\em B-connected}\footnote{Note also that Gallo et al.\ \cite{gallo1993directed} defines {\em B-hyperarc} simply to mean an arc $a$ having $|h(a)|=1$.} to $u_0$ if $v=u_0$ or (generating such B-connected nodes bottom-up, through repeated application of this definition) if there is a hyperarc $a$ with $v \in \text{h}(a)$ and every node $\text{t}(a)$ is B-connected to $u_0$.
A path $P$ from $u_0$ to $u_\ell$ is a {\em B-hyperpath} if $u_\ell$ is B-connected to $u_0$ (using only the arcs $a \in P$).
\end{definition}

{\em Due to the following observation, for the remainder of this paper any use of the term ``path'' will be understood to mean ``B-hyperpath''.}
\begin{observation}
If all arcs are 1-2-hyperarcs, then every simple path is also a B-hyperpath.
\end{observation}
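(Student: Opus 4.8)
The plan is to prove this by a short induction along the arc sequence of the path, exploiting the fact that each 1-2-hyperarc has exactly one tail. Recall that the general definition of B-connectivity requires, for an arc $a$ witnessing that some $v \in \text{h}(a)$ is B-connected to $u_0$, that \emph{every} node of $\text{t}(a)$ already be B-connected to $u_0$. When $|\text{t}(a)|=1$ this universally quantified condition collapses to a single requirement---that the lone tail node be B-connected. This collapse is the entire point of the observation, and it is precisely what can fail for arcs with two or more tails.

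Concretely, given a simple path $P=(a_1,\dots,a_\ell)$ from $u_0$ to $u_\ell$, I would prove by induction on $k$ that the tail $\text{t}(a_k)$ is B-connected to $u_0$ using only arcs of $P$, for each $k \in [\ell]$. The base case is immediate: $\text{t}(a_1)=u_0$, which is B-connected to $u_0$ by the first clause of the definition. For the inductive step, suppose $\text{t}(a_k)$ is B-connected to $u_0$. Since $a_k$ is a 1-2-hyperarc, its only tail is $\text{t}(a_k)$, so the B-connectivity condition for $a_k$ is satisfied; therefore both nodes in $\text{h}(a_k)$ are B-connected to $u_0$ (via the arcs of $P$ already witnessing the B-connectivity of $\text{t}(a_k)$, together with $a_k$ itself). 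In particular, because $\text{t}(a_{k+1}) \in \text{h}(a_k)$, the node $\text{t}(a_{k+1})$ is B-connected to $u_0$, completing the induction.

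Applying the inductive conclusion at $k=\ell$ yields that $\text{t}(a_\ell)$ is B-connected to $u_0$; as above, both heads of $a_\ell$ are then B-connected to $u_0$, and since $u_\ell \in \text{h}(a_\ell)$, the destination $u_\ell$ is B-connected to $u_0$ using only arcs of $P$. By definition this makes $P$ a B-hyperpath.

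There is no genuinely hard step here; the one subtlety worth stating carefully is that the set of arcs witnessing B-connectivity stays within $P$, which the induction handles automatically since each step only ever appends the next arc $a_k$. Note also that acyclicity is never invoked, consistent with the observation holding for \emph{every} simple path (including cyclic ones). The real content is simply that a single tail causes the ``all tails B-connected'' clause to propagate forward trivially along the path, whereas with multiple tails a simple path need not supply B-connectivity for the off-path tails.
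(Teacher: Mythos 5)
Your proof is correct, and it is the natural formalization of what the paper treats as immediate: the paper states this as an unproved observation, and your induction along the arc sequence—using that a single tail collapses the universally quantified B-connectivity condition—is exactly the argument being implicitly invoked. Nothing to add or correct.
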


Via the hypergraph representation used in our hardness proof for \entailprob\ below, we also obtain hardness results for the following problem formulations as a by-product.
\vskip .15cm

\noindent
{\bfseries{\scshape{\apdh}}}\\
\textsc{Instance}: A 1-2-directed hypergraph $H=(V,A)$ and nodes $u,v \in V$.\\
\textsc{Question}: Does there exist an acyclic B-hyperpath in $H$ from from $u$ to $v$?
\vskip .15cm


We want to define an optimization version of the problem where the objective is to minimize path $P$'s length $|P|$, but since a given problem solution may contain no solutions at all (it may be {\em infeasible}, specifically if $v$ is not B-connected to $u$), we obtain the following somewhat awkward definition. Note that defining the cost of an infeasible solution to be infinity is consistent with the convention that $\min \varnothing = \infty$.

\vskip .15cm

\noindent
{\bfseries{\scshape{\minapdh}}}\\
\textsc{Instance}: A 1-2-directed hypergraph $H=(V,A)$ and nodes $u,v \in V$.\\ 
\textsc{Solution}: A B-hyperpath $P$ in $H$.\\ 
\textsc{Measure}: $P$'s length $|P|$, (i.e., its number of hyperarcs), if $P$ is a {\em feasible} solution (i.e., an acyclic B-hyperpath from $u$ to $v$), and otherwise infinity.
\vskip .15cm


Alternatively, we can formulate a ``promise problem''~\cite{goldreich2006promise} special case of the minimization problem, restricted to 
instances admitting feasible solutions.

\vskip .15cm
\noindent
{\bfseries{\scshape{\minapdhprom}}}\\
\textsc{Instance}: A 1-2-directed hypergraph $H=(V,A)$ and nodes $u,v \in V$, where the $v$ is B-connected to $u$.\\
\textsc{Solution}: An acyclic B-hyperpath $P$ in $H$ from $u$ to $v$.\\
\textsc{Measure}: $P$'s length $|P|$. 
\vskip .15cm

\section{The Construction}

\subsection{High-level Strategy}

We will prove that \entailprob\ is NP-Complete by reduction from \textsc{3SAT}, using a construction similar to that of \cite{bang2012finding} (see also \cite{bienstock1991complexity}) for the problem of deciding whether a specified pair of nodes in a directed graph are connected by an {\em induced} path.\footnote{That problem becomes trivial if either the graph is undirected or the {\em induced} constraint is removed.} So, given a SAT formula $F$, we must construct a problem instance $(R,t)$ such that $R \vdash t$ iff $F$ is satisfiable. Intuitively, we want to define $R$ in such a way that it will be representable as a graph (or rather, as a directed hypergraph), whose behavior will mimic that of the induced subgraph problem.




In slightly more detail, the instance $(R,t)$ that we define based $F$ will have a structure that makes it representable as a certain directed hypergraph. This hypergraph (see Fig.\ \ref{fig:overview}) will play an intermediate role between $(R,t)$ and $F$, yielding a two-step reduction between the three problems.
In particular, we will show:
%
%

\begin{enumerate}
\item A path $P$ (from $\alpha\beta$ to $c_{m+1}\gamma$) determines a truth assignment $\mathbf{v}(\cdot)$, and vice versa.
\item $P$ will be acyclic iff $\mathbf{v}(\cdot)$ satisfies $F$.
\item An acyclic path $P$ (or an acyclic superset of it) determines a consistent subset $R' \subset R$ entailing $t=\alpha\beta|\gamma$, and vice versa.
\item Hence $R'$ will be consistent and entail $\alpha\beta|\gamma$ iff $P$ is acyclic iff $\mathbf{v}(\cdot)$ satisfies $F$.
\end{enumerate}

The challenge we face is designing a construction that will force cycles to autonomously result from non-satisfiable formulas (mimicking the logic of an {\em induced} subgraph) is that the definition of entailment of a triple $t$ from an inconsistent set $R$ allows us to pick and choose among the members of $R$, selecting any consistent subset as the witness to $t$'s entailment, seemingly indicating that any troublesome members of $R$ corresponding to back-arcs causing a cycle could simply be omitted---{\em independently} of our choices selecting the triples that we {\em are} relying on.

The way we disallow this freedom is that we model a rooted triple not as a directed edge in a graph but as a directed {\em hyperedge}, pointing from one tail node to two head nodes. Although the definition of entailment from an inconsistent triple set $R$ means we can omit any hyperarc we like in defining a possible $H'$, we cannot omit {\em half} a hyperarc: ``turning on'' a 1-2-hyperarc $u{\to}\{v,v'\}$ because we want tail $u$ to point to head $v$ also necessarily causes $u$ to point to $v'$.

For most of the arcs we define in our construction, these second head nodes will be just spinning wheels: sink nodes having no effect, and omitted for clarity from some figures. The important ones are those in which tail $u$ and one head $v$ both lie in a clause gadget and the other head $v'$ lies in a variable gadget.

\begin{figure}
\centering
\makebox[\textwidth][c]{\hskip 0cm\includegraphics[width=1.1877\textwidth]{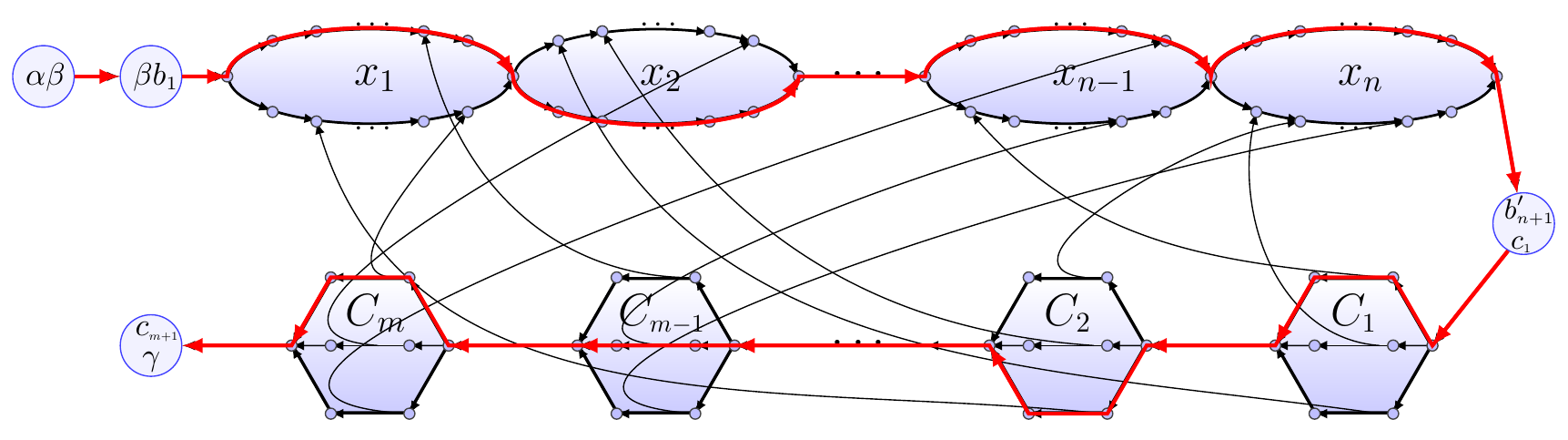}}
\vskip -.05cm
\caption{Construction overview, with the path $P$ from $\alpha\beta$ to $c_{m+1}\gamma$ shown in red. Each ellipse represents the gadget for one variable $x_i$ (see Fig.~\ref{fig:var}), and each hexagon represents the gadget for one clause $C_j$ (see Fig.~\ref{fig:clause}). (Sink nodes are omitted for clarity.)
%
The path shown corresponds to a truth assignment in which $x_2$ is true and $x_1,x_3,x_4$ are false.
For example, the path shown takes $x_1$'s {\em positive} (upper) side, passing through its {\em positive} nodes, which {\em renders $x_1$'s positive appearances unusable}, thus setting $x_1$ to {\em false}.
$C_m$'s upper {\em witness path} points to $x_1$'s {\em negative} (lower) side, indicating that $x_1$'s appearance in $C_m$ is {\em negative}. Thus $x_1$ being false satisfies $C_m$.}
\label{fig:overview}
%
\end{figure}

\subsection{Identifying Rooted Triples and 1-2-Hyperarcs}

A core idea of our construction and proof is a correspondence between rooted triples and $H$'s hyperarcs (all 1-2-hyperarcs), which renders them mutually definable in terms of one anther. 
Each of $H$'s nodes will be identified with an unordered pair of leaves $\{p,q\} \in {L \choose 2}$ (written for convenience $pq$), and each of its hyperarcs will have structure of the form $pq{\to}\{po,qo\}$, with $p,q,o$ all distinct. That is, {\em each of an arc $u{\to}\{v,v'\}$'s two heads $v,v'$ will contain one of the tail $u$'s two leaves plus a different leaf common to both $v$ and $v'$}. This structure ensures that each hyperarc encodes a rooted triple, 
rather than a constraint of the more general form $pp' < qq'$
Thus we can write $A = \{ pq{\to}\{po, qo\} : pq | o \in R\}$ or $R = \{ pq | o : pq{\to}\{po, qo\} \in A \}$. Indeed, we can simply {\em identify them with one another} as follows.

\begin{definition}\label{def:corresp}
For a triple $pq|o$, the corresponding hyperarc is $\arc(pq|o) = pq{\to}\{po,qo\}$; conversely, for a 1-2-hyperarc $pq{\to}\{po,qo\}$, the corresponding triple is $\triple(pq{\to}\{po,qo\}) = pq|o$. 
For a triple set  $R'$, we write $\arcs(R')$ to denote the same set $R'$, with but its members {\em treated as arcs}, and similarly in reverse, for an arc set $A'$, we write $\triples(A')$.
\end{definition}

Given 
this,
we can also give a more abstract correspondence.

\begin{definition}\label{def:abstractcorresp}
For a 1-2-hyperarc $u{\to}\{v,v'\}$, the corresponding triple is $\triple(u{\to}\{v,v'\}) = v \oplus v' | v \cap v'$, where $\oplus$ denotes symmetric difference.
We also combine the two models' syntax, writing $u|o$ to denote $pq|o$ when $u=pq$, i.e., when hyperarc $u{\to}\{v,v'\} = \arc(pq|o)$.
\end{definition}

This leads to the following equivalent restatements of the second dyadic inference rule (recall Def.\ \ref{def:2ord}) in forms that will sometimes be more convenient.

\begin{observation}
The first inference of dyadic inference rule (\ref{eq:dyadic}) can be stated as:
\begin{alignat}{3}
\{pq{\to}\{po,qo\},~ qo{\to}\{qo',oo'\}\} ~&\vdash~ pq{\to}\{po',qo'\} ~~~~&&(\forall\:p,q,o,o' \in L)\nonumber\\
\{u_{k-1} | o, ~ u_k | o'\} ~&\vdash~ u_{k-1} | o' ~~~~&&(\forall\:u_{k-1},u_k \in V, o \in u_k \text{ s.t. } |u_{k-1} \cap u_k|=1) \label{eq:2b}
\end{alignat}
%
%
\end{observation}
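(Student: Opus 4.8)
The plan is to verify that each displayed restatement of the first dyadic inference rule is genuinely equivalent to the original rule $\{pq|o,\,qp'|o\}\vdash pp'|o$, by unwinding the arc–triple correspondence of Definitions~\ref{def:corresp} and~\ref{def:abstractcorresp}. Since the observation makes no new mathematical claim beyond the already-established dyadic rule~(\ref{eq:dyadic}), the entire task is a bookkeeping check: I would confirm that applying $\triple(\cdot)$ and $\arc(\cdot)$ to the hyperarc expressions reproduces exactly the triples appearing in the original rule, under a suitable renaming of the bound variables.

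First I would translate the arcs in~(\ref{eq:2b}) into triples using $\triple(u{\to}\{v,v'\}) = v\oplus v'\,|\,v\cap v'$. For the hypothesis arc $pq{\to}\{po,qo\}$ we have $\{po\}\oplus\{qo\}=\{p,q\}$ and $\{po\}\cap\{qo\}=\{o\}$, so $\triple(pq{\to}\{po,qo\})=pq|o$. For $qo{\to}\{qo',oo'\}$ we get $\{qo'\}\oplus\{oo'\}=\{q,o\}$ and the common leaf $o'$, giving $qo|o'$. For the conclusion $pq{\to}\{po',qo'\}$ we obtain $pq|o'$. Thus the arc-form line asserts $\{pq|o,\,qo|o'\}\vdash pq|o'$, which is precisely the first inference of~(\ref{eq:dyadic}) (the second inference there, $po|o'$, is simply not restated here). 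So the arc-form line is literally the first half of~(\ref{eq:dyadic}) rewritten via the correspondence, and no separate justification is needed.

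Next I would check the abbreviated node-form line. Writing $u_{k-1}=pq$ and $u_k=qo$, the side condition $|u_{k-1}\cap u_k|=1$ holds (they share exactly the leaf $q$), and $o\in u_k$ identifies which shared-structure leaf plays the role of the tail's surviving element; the combined-syntax convention $u_{k-1}|o'$ then unpacks to $pq|o'$, matching the conclusion above. I would state explicitly that the notation $u_{k-1}|o$ is well-defined here because the constraint $|u_{k-1}\cap u_k|=1$ forces the heads of $\arc(u_{k-1}|o)$ to be determined by the shared leaf, so the two lines denote the same inference.

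The only mild obstacle is notational rather than mathematical: one must take care that the bound variables are being renamed consistently, since the original rule~(\ref{eq:dyadic}) uses $p,q,o,o'$ in one role and the restatement silently relabels the tail as $qo$ (not $qp'$ as in the very first line of Definition~\ref{def:2ord}). I would therefore emphasize that~(\ref{eq:2b}) restates the \emph{second} line of~(\ref{eq:dyadic}) — namely $\{pq|o,\,qo|o'\}\vdash\{pq|o',\,po|o'\}$, projected onto its first conclusion — rather than the first line, and that the equivalence is then immediate once the arc/triple dictionary is applied termwise. Beyond this relabeling check, there is nothing to prove, so the write-up would be short and purely verificational.
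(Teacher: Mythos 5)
Your verification is correct and is exactly the justification the paper intends for this Observation (which it states without further proof): unwinding $\triple(u{\to}\{v,v'\})=v\oplus v'\,|\,v\cap v'$ termwise recovers $\{pq|o,\ qo|o'\}\vdash pq|o'$, the first conclusion of the rule carrying the label (\ref{eq:dyadic}), and the side condition $o\in u_k$, $|u_{k-1}\cap u_k|=1$ in (\ref{eq:2b}) is precisely what makes $u_k$ a head of $\arc(u_{k-1}|o)$ so that the abbreviated form is well-defined. Your closing paragraph makes the right correction --- the restatement is of the second dyadic rule of Definition~\ref{def:2ord} (the labeled line), not the first --- so the opening reference to $\{pq|o,\ qp'|o\}\vdash pp'|o$ should simply be deleted.
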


\begin{figure}
	\centering 
	\begin{subfigure}{\textwidth}
\centering
\makebox[\textwidth][c]{\includegraphics[width=1.05\textwidth]{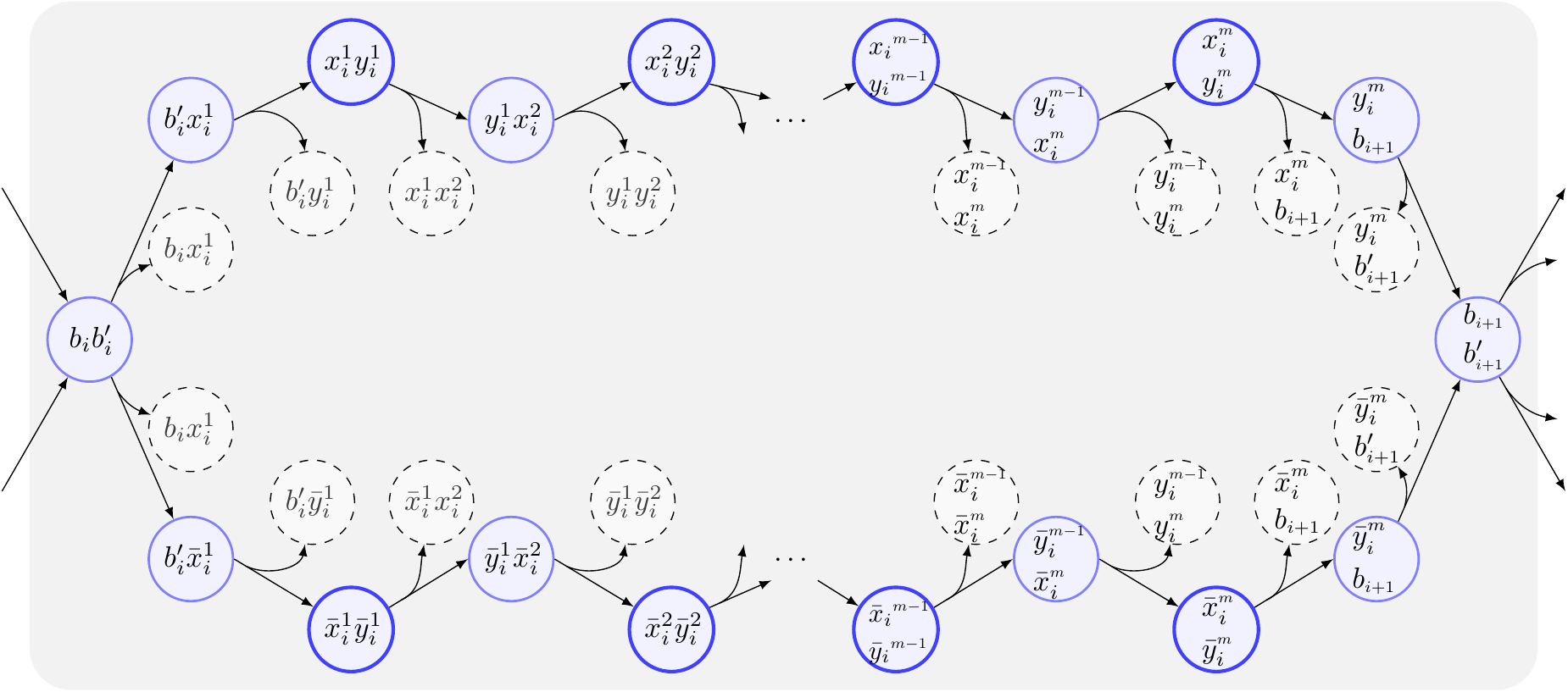}}
		\caption{Variable gadget for $x_i$. Any path passing through this gadget (drawn left to right) has two options, taking its {\em negative} (lower) side, making $x_i$ {\em true}, or its {\em positive} (higher) side, making $x_i$ {\em false}. 
That is, the truth value corresponding to the path is the one making {\em the literals in the nodes on the unused side} true.
Intuitively, {\em a path traversing one of the gadget's two sides renders all the literals appearing within that side's nodes unusable}. Note that the rightmost node ($b_{i+1}b'_{i+1}$) is also (for each $i<n$) the leftmost node of $x_{i+1}$'s gadget.}\label{fig:var}
	\end{subfigure}\\
	\vskip .5cm
	
	\begin{subfigure}{\textwidth}
\centering
\makebox[\textwidth][c]{\includegraphics[width=.61\textwidth]{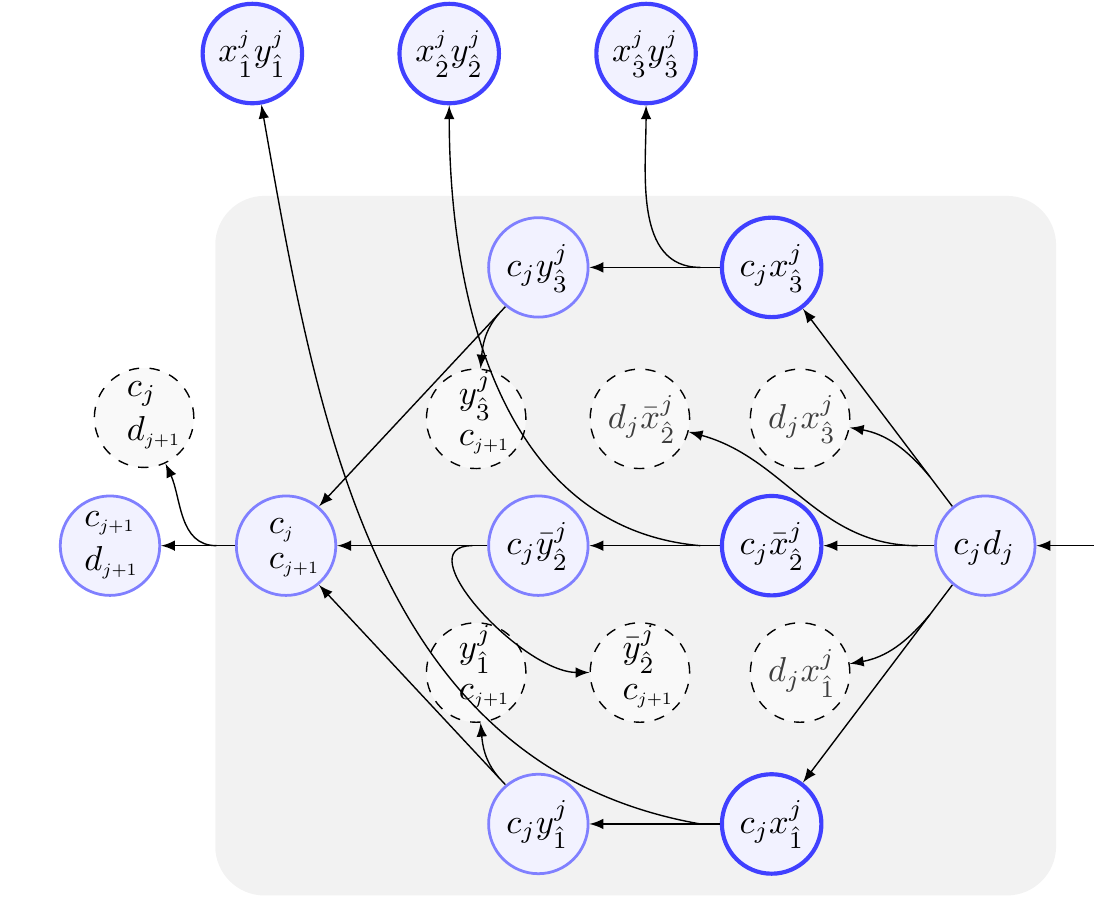}}
		\caption{Clause gadget for $C_j = (x_{i_1}^j \vee \bar x_{i_2}^j \vee x_{i_3}^j)$, which is followed (drawn outside the shaded region) by node $c_{j+1}d_{j+1}$ (or $c_{m+1}\gamma$, in the case of $j=m$). Any path passing through this gadget (drawn right to left) has three options:
going {\em up}, {\em straight across}, or {\em down}, each corresponding to one choice among $C_j$'s three possible {\em witness paths}.
The arrow from the witness path's {\em witness node}, say, $c_j \tilde x_i^j$, to a node $\tilde x_i^j \tilde y_i^j$ lying within {\em one of the two sides} of $x_i$'s gadget (and outside the shaded region) represents {\em the appearance of $x_i$ in $C_j$}; the 1-2-hyperarc that arrow is constituent of forces an acyclic path taking this witness path to have taken the {\em opposite side} of $x_i$'s gadget.}\label{fig:clause}
	\end{subfigure}
	\caption{Gadgets used in the reduction. Each pair of arrows drawn forking from the same tail node represents one 1-2-hyperarc. Sink nodes have dashed borders and are shaded lighter (gray) than non-sink nodes (blue). The clause gadget nodes that point to variable gadget nodes and the variable gadget nodes that can be pointed to by them are both drawn with thick borders.}
\end{figure}



We emphasize again the following two related facts about the meaning of an arc $pq{\to}\{po, qo\} \in A$:

\begin{enumerate}
\item If $T$ is a tree with $p,q,o \in L(T)$ and $p q | o \in r(T)$, then lowest common ancestors $po$ and $qo$ are equal, i.e., they refer to the same node in $T$.
\item Yet $po$ and $qo$ are two distinct A-nodes (in $V$) of the hypergraph $H$.
\end{enumerate}

That is, ``turning on'' triple $p q | o$ (by adding it to the triple set $R'$) has the effect of causing the {\em hypergraph nodes} $po$ and $qo$ to thence refer to the same {\em tree node} (in any tree displaying $R'$).

\subsection{Defining $L$ and $R$}


Let the SAT formula $F$ on variables $x_1,...,x_n$ consist of $m$ clauses $C_j$, each of the form $C_j = (\tilde x_{i_1}^j \vee \tilde x_{i_2}^j \vee \tilde x_{i_3}^3)$ or $C_j = (\tilde x_{i_1}^j \vee \tilde x_{i_2}^j)$,
where each literal $\tilde x_i^j$ has the form either $x_i$ or $\bar x_i$ for some $i$.

We define the leaf set $L$ underlying $R$ as $L = L_1 \cup L_2 \cup L_3 \cup L_4$, where:

\begin{itemize}
\item \makebox[5.5cm]{$L_1 = \bigcup_{i \in [n],j \in [m]} \{x_i^j,\bar x_i^j,y_i^j,\bar y_i^j\}$\hfill} ($4nm$ leaves)\footnote{Alternatively, we could create such nodes only corresponding to actual appearances of variables in clauses, i.e., 
$L_1 = \bigcup_{i,j: x_i \in C_j} \{x_i^j,y_i^j\}\:\cup\:\bigcup_{i,j: \bar x_i \in C_j} \{ \bar x_i^j,\bar y_i^j\}$ \: ($\le 3m$ leaves).}
\item \makebox[5.5cm]{$L_2 = \bigcup_{i \in [n+1]} \{b_i,b'_i\}$\hfill} ($2n+2$ leaves)
\item \makebox[5.5cm]{$L_3 = \bigcup_{j \in [m]} \{c_j,d_j\}$\hfill} ($2m$ leaves)
\item \makebox[5.5cm]{$L_4 = \{\alpha, \beta, \gamma\}$\hfill} (3 leaves)
\end{itemize}

For each variable $x_i$ in $F$, we create a gadget consisting of two parallel length-$2m{+}2$ paths intersecting at their first and last nodes but otherwise node-disjoint (see Fig.~\ref{fig:var}), where the path taken will determine the variable's truth value. The rooted triples in $R$ corresponding to variable $x_i$'s gadget are:

\begin{itemize}
\item On its {\em positive side}:

$\{b_i b'_i |x_i^1, ~~ b'_i x_i^1 | y_i^1, ~~ x_i^1 y_i^1 | x_i^2, ~~ y_i^1 x_i^2 | y_i^2, ~ ..., ~ x_i^{m-1} y_i^{m-1} | x_i^m, ~~ y_i^{m-1} x_i^m  | y_i^m, ~~ x_i^m y_i^m | b_{i+1}, ~~ y_i^m b_{i+1} | b'_{i+1}\}$
\item On its {\em negative side}:

$\{b_i b'_i | \bar x_i^1, ~~ b'_i \bar x_i^1 | \bar y_i^1, ~~ \bar x_i^1 \bar y_i^1 | \bar x_i^2, ~~ \bar y_i^1 \bar x_i^2 | \bar y_i^2, ~ ..., ~  \bar x_i^{m-1} \bar y_i^{m-1} | \bar x_i^m, ~~ \bar y_i^{m-1} \bar x_i^m | \bar y_i^m, ~~ \bar x_i^m \bar y_i^m | b_{i+1}, ~~ \bar x_i^m b_{i+1} | b'_{i+1}\}$
\end{itemize}

For each clause $C_j = (\tilde x_{i_1}^j \vee \tilde x_{i_2}^j \vee \tilde x_{i_3}^j)$ in $F$, we create a gadget consisting of three (or two, in the case of a two-literal clause) parallel length-3 paths, intersecting in their first and fourth nodes, followed by one additional (shared) edge (see Fig.~\ref{fig:clause}), where the path taken (the {\em witness path}) will correspond to which of $C_j$'s literal satisfies the clause (or one among them, in the case of multiple true literals). The second node of $C_j$'s witness path (of the form $c_j \tilde x_i^j$, and corresponding to the appearance of literal $\tilde x_i$) is its {\em witness node}. The rooted triples in $R$ corresponding to clause $C_j$'s gadget are:

\begin{itemize}
\item $\{c_j d_j | x_i^j, ~~ c_j x_i^j | y_i^j, ~~ c_j y_i^j | c_{j+1}\}$, ~ for each positive appearance of a variable $x_i$ in $C_j$
\item $\{c_j d_j | \bar x_i^j, ~~ c_j \bar x_i^j | \bar y_i^j, ~~ c_j \bar y_i^j | c_{j+1}\}$, ~ for each negative appearance of a variable $x_i$ in $C_j$
\item $c_j c_{j+1} | d_{j+1}$, ~if $j<m$
\end{itemize}

Finally, $R$ has the following triples connecting the pieces together, connecting the source node $\alpha \beta$ to a chained-together series of variable gadgets, the last of which is connected (via an intermediate node) to the first of a chained-together series of clause gadgets, the last of which is connected to the destination node $c_{m+1}\gamma$:

\begin{itemize}
\item $\{\alpha \beta | b_1, ~~ \beta b_1 | b'_1\}$
\item $\{b_{n+1} b'_{n+1} | c_1, ~~ b'_{n+1} c_1 | d_1\}$
\item $c_m c_{m+1} | \gamma$
\end{itemize}


It is important to remember that all these connections are 1-2-hyperarcs. Sometimes both heads will be nodes within variable and clause gadgets, but in most cases one of the two heads will be a sink node whose only role is to permit the hyperarc to conform to the required structure.


\section{The Proof}

Clearly \entailprob\ is in NP: if we guess the subset $R' \subset R$, then we can verify both that $R'$ is consistent and that $R' \vdash t$ by executing Aho et al.\ \cite{aho1981inferring}'s polynomial-time BUILD algorithm on $R'$ \cite{bryant1995extension}. \minapdh\:is as well: guess the path, and check that it is acyclic.

Now we prove hardness, arguing that $R$ contains a consistent subset entailing $\alpha\beta|\gamma$ iff $H$ contains an acyclic path $P$ from $\alpha\beta$ to $c_{m+1}\gamma$ iff $F$ admits a satisfying assignment $\mathbf{v}(\cdot)$, in two steps.


\subsection{Acyclic Path $\Leftrightarrow$ Satisfying Truth Assignment}

First we argue that acyclic paths correspond to satisfying truth assignments.

\begin{lemma}
There is an an acyclic path $P$ from $\alpha\beta$ to $c_{m+1}\gamma$ iff $F$ admits a satisfying truth assignment $\mathbf{v}(\cdot)$.
\end{lemma}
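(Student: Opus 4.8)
The plan is to first establish a clean bijection between paths $P$ from $\alpha\beta$ to $c_{m+1}\gamma$ and truth assignments $\mathbf{v}(\cdot)$, and then to match acyclicity of $P$ with satisfaction of $F$ clause by clause. I would begin by tracing the ``main line'' of an arbitrary such path through the serial structure of $H$: following one head of each arc, every path is forced from $\alpha\beta$ through the source connector into $b_1b_1'$, then through the chained variable gadgets (entering $x_i$'s gadget at $b_ib_i'$ and leaving at $b_{i+1}b_{i+1}'$), then through the variable-to-clause connector into $c_1d_1$, through the chained clause gadgets (entering $C_j$'s gadget at $c_jd_j$ and leaving at $c_{j+1}d_{j+1}$), and finally through $c_mc_{m+1}$ into $c_{m+1}\gamma$. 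Because consecutive gadgets share their boundary nodes and each internal node has a unique main-line successor, the only freedom is which of the two sides $P$ takes in each variable gadget and which witness path it takes in each clause gadget. I would then define $\mathbf{v}(x_i)$ to be the value making the literals on the \emph{unused} side of $x_i$'s gadget true (so taking the positive side sets $x_i$ false, per Fig.~\ref{fig:var}), giving the claimed path--assignment correspondence.

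Next I would isolate exactly which arc-heads can produce a back-arc. Converting each triple via Def.~\ref{def:corresp}, the followed main-line heads always advance forward along the series and, by the superscript/subscript naming of leaves, never reuse an earlier tail. The \emph{second}, unfollowed head of nearly every arc is a sink: its leaf-pair never appears as the left-hand side of any triple in $R$, so it has out-degree $0$ and can never coincide with an earlier tail. The sole exceptions are the cross-edges fired at each clause witness node: the arc $\arc(c_j\tilde x_i^j \mid \tilde y_i^j) = c_j\tilde x_i^j{\to}\{c_j\tilde y_i^j,\ \tilde x_i^j\tilde y_i^j\}$ has its second head $\tilde x_i^j\tilde y_i^j$ lying \emph{inside} $x_i$'s gadget, and that node is exactly a main-line tail of the side of $x_i$'s gadget matching the sign of $\tilde x_i$. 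Verifying this complete accounting---that cross-edges are the unique heads that are ever also tails, and hence the only possible source of back-arcs---is the technical heart of the argument.

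With this in hand both directions are short. For the forward direction, suppose $P$ is acyclic; since every variable gadget precedes every clause gadget in $P$'s order, the cross-edge fired at $C_j$'s witness for literal $\tilde x_i$ would create a back-arc precisely when $P$ had already traversed the matching side of $x_i$'s gadget. Acyclicity thus forces $P$ onto the \emph{opposite} side, which by the definition of $\mathbf{v}(\cdot)$ makes $\tilde x_i$ true and satisfies $C_j$; as this holds for all clauses, $\mathbf{v}(\cdot)$ satisfies $F$. For the converse, given a satisfying $\mathbf{v}(\cdot)$ I would build $P$ by taking in each variable gadget the side dictated by $\mathbf{v}(\cdot)$ and in each clause gadget the witness path of any literal $\mathbf{v}(\cdot)$ makes true; that literal's cross-edge then points to the \emph{unused} side of the relevant variable gadget, so by the accounting above $P$ has no back-arc and is acyclic. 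I expect the main obstacle to be the bookkeeping of the second paragraph: one must confirm that no two gadgets accidentally share a node name and that every non-cross-edge second head is genuinely a sink, so that cross-edges really are the unique mechanism by which cycles can arise.
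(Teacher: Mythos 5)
Your proposal is correct and follows essentially the same route as the paper's proof: the same truth-value convention (path on the positive side sets $x_i$ false), the same observation that the only possible back-arcs are the clause-witness cross-heads landing inside variable gadgets, and the same clause-by-clause matching of acyclicity with satisfaction in both directions. The paper states the key accounting step more tersely (``any potential cycle must involve an arc whose tail lies in a clause gadget and a head in a variable gadget'') where you propose to verify it explicitly, but the argument is the same.
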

\begin{proof}
($\Leftarrow$)
We construct $P$ based on the satisfying truth assignment $\mathbf{v}(\cdot)$. For each $x_i$, if $\mathbf{v}(x_i) =$ true, then we have $P$ take the $x_i$ gadget's {\em negative} side (drawn lower in Fig.\ \ref{fig:var}); otherwise, we have $P$ take its {\em positive} side (drawn higher). Then within each $C_j$, among the three possible witness paths, we choose one corresponding to the appearance of a literal that is true under $\mathbf{v}(\cdot)$ (which by assumption must exist).
For the remainder of $P$ we have it take the required connecting arcs.

Notice that any potential cycle contained within $P$ must necessarily involve an arc $a$ whose tail lies within some clause $C_j$'s gadget and one of whose heads lies within some variable $x_i$'s gadget. But the witness path taken within $C_j$'s gadget corresponds to an appearance $\tilde x_i^j$ of a true literal $\tilde x_i$. If it a positive literal, i.e., $\tilde x_i=x_i$, then $\mathbf{v}(x_i) =$ true, and we would have had $P$ take the $x_i$ gadget's {\em negative} side---not the side $a$ is pointing to. Similarly, if $\tilde x_i=\bar x_i$, then $\mathbf{v}(x_i) =$ false, $P$ would have taken the $x_i$ gadget's {\em positive} side, which again is then not the side $a$ is pointing to. Thus $P$ must be acyclic.

($\Rightarrow$)
We read off $\mathbf{v}(\cdot)$ from the choices the path $P$ makes when passing through the series of variable gadgets, i.e., setting $\mathbf{v}(x_i) = $ true if $P$ used the $x_i$ gadget's {\em negative} side, and setting it to false otherwise. Now, consider a clause $C_j$'s gadget, whose witness node points to a node $v'$ within some variable $x_i$'s gadget. Because $P$ is acyclic, we know that $v'$ is on the side of $x_i$'s gadget that $P$ did {\em not} use. Since we chose the truth value $\mathbf{v}(x_i)$ that makes literals within nodes on the $x_i$ gadget's {\em unused} side true, this implies that literal corresponding to $C_j$'s witness node is true, thus satisfying the clause.
\end{proof}

Thus we have proven:

\begin{theorem}\label{thm:apdh}
\apdh\:is NP-Complete.
\end{theorem}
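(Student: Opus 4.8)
The plan is to establish NP-completeness in the two standard steps: membership in NP, followed by NP-hardness by reduction from \textsc{3SAT}. Membership is the easy direction and essentially mirrors the argument already given for \minapdh: a certificate is simply the sequence of hyperarcs $P = (a_1,\dots,a_\ell)$, and in polynomial time one checks that $P$ is a simple path from $u$ to $v$ (verifying $u = \text{t}(a_1)$, $v \in \text{h}(a_\ell)$, and $\text{t}(a_{k+1}) \in \text{h}(a_k)$ for each $k$) and that $P$ is acyclic (scanning for back-arcs, i.e., confirming that no head of a later arc is the tail of an earlier one). Hence \apdh\ is in NP.

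For hardness, I would reduce from \textsc{3SAT} using exactly the construction developed above, which maps a formula $F$ to the triple set $R$ and, equivalently, to the 1-2-hypergraph $H = (V, A)$ with $A = \arcs(R)$, designating $\alpha\beta$ as source and $c_{m+1}\gamma$ as destination. First I would confirm the reduction is polynomial: the leaf set $L = L_1 \cup L_2 \cup L_3 \cup L_4$ has $O(nm)$ leaves, so $V \subseteq {L \choose 2}$, the number of gadget hyperarcs is likewise polynomial in $n$ and $m$, and all of $R$ (hence $A$) can be written down in polynomial time.

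The correctness of the reduction is precisely the content of the Lemma just proved, which states that $H$ contains an acyclic path from $\alpha\beta$ to $c_{m+1}\gamma$ iff $F$ is satisfiable. The one bridge I would make explicit is that the ``acyclic path'' produced by the Lemma is indeed an \emph{acyclic B-hyperpath} in the sense of the problem statement: this is immediate from the Observation that in a 1-2-hypergraph every simple path is automatically a B-hyperpath. Combining membership in NP with this polynomial-time, correctness-preserving reduction yields NP-completeness.

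The genuinely hard part of the overall argument is the construction and the Lemma themselves---engineering the variable and clause gadgets so that a cycle is \emph{forced} to appear exactly when no satisfying assignment exists, exploiting that a 1-2-hyperarc cannot be ``half turned on.'' For the theorem as stated, however, that work is already done, so the only remaining obstacles are the routine bookkeeping of polynomial size and the one-line identification of simple paths with B-hyperpaths.
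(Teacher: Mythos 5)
Your proposal is correct and follows essentially the same route as the paper: the paper derives Theorem~\ref{thm:apdh} directly from the preceding lemma (acyclic $\alpha\beta$--$c_{m+1}\gamma$ path iff $F$ satisfiable) together with the earlier remark that NP membership follows by guessing the path and checking acyclicity. Your additional explicit checks (polynomial size of the construction, and the observation that every simple path in a 1-2-hypergraph is a B-hyperpath) are exactly the bookkeeping the paper leaves implicit.
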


Since an infeasible solution is defined to have infinite cost, an algorithm with {\em any} approximation factor would allow us to distinguish between positive and negative problem instances, which immediately implies:
\begin{corollary}
Approximating \minapdh\:to within {\em any} factor is NP-hard.
\end{corollary}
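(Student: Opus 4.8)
The plan is to observe that the corollary follows from Theorem~\ref{thm:apdh} by an approximation-preserving reduction from the decision problem \apdh, exploiting the fact that the measure of \minapdh\ assigns \emph{infinite} cost to infeasible solutions. The core idea is that no finite multiplicative approximation factor can bridge the gap between a finite optimum and an infinite one, so any reasonable approximation algorithm for \minapdh\ would reveal whether the optimum is finite, i.e., whether an acyclic B-hyperpath from $u$ to $v$ exists at all.

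Concretely, I would suppose for contradiction that for some factor $\rho$ there is a polynomial-time algorithm $\mathcal{A}$ that, on every instance, returns a solution of cost at most $\rho$ times the optimum. Given an instance $(H,u,v)$ of \apdh, I would feed it directly to $\mathcal{A}$ as an instance of \minapdh\ and inspect the returned solution $P$. If an acyclic B-hyperpath from $u$ to $v$ exists, then the optimum is a finite positive integer (at most $|A|$), so $\rho$ times the optimum is still finite, forcing $\mathcal{A}$ to return a \emph{feasible} $P$---an actual acyclic B-hyperpath from $u$ to $v$. If no such path exists, then the optimum is $\infty$ (by the convention $\min\varnothing=\infty$), and every solution, including whatever $\mathcal{A}$ outputs, has infinite cost. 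Hence one can decide \apdh\ simply by checking in polynomial time whether $P$ is feasible (a simple path that is acyclic and runs from $u$ to $v$; by the earlier observation it is then automatically a B-hyperpath), answering ``yes'' exactly when the returned cost is finite. Since \apdh\ is NP-complete by Theorem~\ref{thm:apdh}, no such $\mathcal{A}$ can run in polynomial time unless $\mathrm{P}=\mathrm{NP}$.

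I would emphasize that this argument is insensitive to the value of $\rho$: the same reasoning goes through for any factor that is finite on finite inputs, including factors growing with the instance size, precisely because the only distinction the algorithm must respect is finite-versus-infinite optimum. Rather than a genuine obstacle, the main point requiring care is to verify the feasibility of $\mathcal{A}$'s output correctly and to note that the approximation guarantee is vacuous---hence harmless---on infeasible instances, so that the test ``does $\mathcal{A}$ return a finite-cost solution?'' faithfully decides \apdh.
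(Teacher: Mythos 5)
Your argument is correct and is precisely the paper's reasoning: since infeasible solutions have infinite cost, any finite-factor approximation algorithm for \minapdh\ would distinguish instances with a finite optimum from those with an infinite one, thereby deciding \apdh, which is NP-complete by Theorem~\ref{thm:apdh}. The paper states this in one sentence; your write-up just spells out the same reduction in more detail.
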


Even if we restrict ourselves to problem instances admitting feasible solutions, this ``promise problem" \cite{goldreich2006promise} special case is hard to approximate within any reasonable factor.


\begin{corollary}
\minapdhprom\:is NP-hard to approximate to within factor $|V|^{1-\epsilon}$ for all $\epsilon>0$.
\end{corollary}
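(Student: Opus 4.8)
The plan is to amplify the feasibility hardness of Theorem~\ref{thm:apdh} by padding. Recall that the reduction underlying that theorem maps a \textsc{3SAT} formula $F$ on $n$ variables and $m$ clauses to a 1-2-hypergraph $H$ on $V_0 := |V(H)| = O(nm)$ nodes such that $H$ contains an acyclic path from $\alpha\beta$ to $c_{m+1}\gamma$ iff $F$ is satisfiable; moreover, the $(\Leftarrow)$ direction of the Lemma produces a \emph{specific} such path of length $\ell_0 = O(nm)$. The obstacle in reusing this for \minapdhprom\ is that in the unsatisfiable case $H$ has no feasible solution at all, violating the promise. I would repair this defect while simultaneously converting ``infeasible'' into ``very long.''

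First I would fix the target ratio exponent, choose a constant $c \ge \lceil 2/\epsilon \rceil$, and set $N := V_0^{\,c}$ (a polynomial in the input size for each fixed $\epsilon$, so the reduction stays polynomial-time). I would append to $H$ a fresh simple chain of $N$ hyperarcs forming a B-hyperpath from $\alpha\beta$ to $c_{m+1}\gamma$: introduce fresh nodes $w_0 = \alpha\beta,\, w_1,\dots,w_{N-1},\, w_N = c_{m+1}\gamma$ together with fresh sinks $s_1,\dots,s_N$, and add the arcs $w_{k-1}{\to}\{w_k,s_k\}$. These are all 1-2-hyperarcs, so by the Observation the chain is a B-hyperpath; it is acyclic and, apart from its two endpoints, node-disjoint from $H$. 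Call the result $H'$, with $|V| := |V(H')| = V_0 + O(N) = \Theta(N)$. Because the chain is always present and always acyclic, $c_{m+1}\gamma$ is B-connected to $\alpha\beta$ regardless of $F$, so $(H',\alpha\beta,c_{m+1}\gamma)$ is a legal \minapdhprom\ instance.

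Next I would establish the gap. In the satisfiable case the length-$\ell_0$ path of the Lemma is acyclic and feasible in $H'$, so $\mathrm{OPT}(H') \le \ell_0$. In the unsatisfiable case, Theorem~\ref{thm:apdh} guarantees $H$ has no acyclic path from $\alpha\beta$ to $c_{m+1}\gamma$; since the chain shares only the two endpoints with $H$, any simple path leaving $\alpha\beta$ along a chain arc must stay on the chain until it reaches $c_{m+1}\gamma$, whereas one leaving into $H$ would have to reach $c_{m+1}\gamma$ through $H$ alone (impossible). Hence every feasible path is the entire chain, giving $\mathrm{OPT}(H') = N$. A $\rho$-approximation with $\rho < N/\ell_0$ would therefore decide \textsc{3SAT}: its output is $\le \rho\ell_0 < N$ exactly when $F$ is satisfiable. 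Since $\ell_0 = O(V_0)$ and $N = V_0^{\,c}$, we get $N/\ell_0 = \Omega(V_0^{\,c-1})$, while $|V|^{1-\epsilon} = O(V_0^{\,c(1-\epsilon)})$; the choice $c \ge 2/\epsilon$ makes $c-1 \ge c(1-\epsilon)$, so $N/\ell_0 \ge |V|^{1-\epsilon}$ for all large inputs, and a $|V|^{1-\epsilon}$-approximation would solve \textsc{3SAT}.

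The step I expect to require the most care is the unsatisfiable-case argument: verifying that introducing the bypass creates no short acyclic path mixing chain and gadget arcs, which is precisely where the node-disjointness away from the two endpoints is essential, together with confirming that the Lemma's path is a genuine upper bound on $\mathrm{OPT}$ in the satisfiable case. The exponent-matching arithmetic is then routine once $N$ and $|V|$ are lined up.
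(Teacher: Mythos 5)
Your proposal is correct and uses essentially the same idea as the paper's proof: pad the instance with a long dummy acyclic B-hyperpath from source to destination (of length polynomially larger than any path in the original hypergraph) so that the promise is always satisfied, and then observe that a $|V|^{1-\epsilon}$-approximation would separate the ``short path exists'' case from the ``forced onto the dummy chain'' case. The only (immaterial) difference is that you pad the specific \textsc{3SAT}-derived hypergraph while the paper pads a generic instance of the existence problem, and your exponent bookkeeping ($c \ge \lceil 2/\epsilon\rceil$ versus the paper's $\lceil 1+1/\epsilon\rceil$) is equally valid.
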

\begin{proof}
Let $\epsilon \in (0,1/4)$. Let $I = \langle H,u,v \rangle$ be an instance \minapdhprom. For this proof only, let $n=|V(H)|$.
We extend $H$ to a new hypergraph $H'$ by adding polynomially many dummy arcs forming an acyclic $u{\to}v$ path of length $n^{\lceil1+1/\epsilon\rceil}$, yielding a hypergraph $H'$ with $n' = n+n^{\lceil1+1/\epsilon\rceil}-1$ nodes and a new problem instance $I' = \langle H',u,v \rangle$, which is also by construction a valid problem instance of \minapdhprom. If $H$ has a path ($I$ is a positive instance), then $H'$ (like $H$) has one of some length $OPT^+ \le n-1$; if not ($I$ is a negative instance), then $H'$'s only path has length $OPT^- = n^{\lceil1+1/\epsilon\rceil}$.

Now, suppose there were a $|V|^{1-\epsilon}$-approximation algorithm.

Running it on the $I'$ resulting from a negative $I$ will yield a path of length at least $LB^- = OPT^- = n \cdot n^{\lceil1/\epsilon\rceil}$.

Running it on the $I'$ resulting from a positive $I$ will yield a path of length of at most
\begin{align*}
UB^+ &= n'^{1-\epsilon}\cdot (n-1)\\
&= (n+n^{\lceil1+1/\epsilon\rceil}-1)^{1-\epsilon}\cdot (n-1)\\
& < (n-1) \cdot n^{\lceil1+1/\epsilon\rceil\cdot(1-\epsilon)} + (n-1)^{2-\epsilon}\\
&< n \cdot n^{\lceil1+1/\epsilon\rceil\cdot(1-\epsilon)},
\end{align*}
where the last inequality follows from the fact that $(n-1)^{2-\epsilon} < n^{\lceil1+1/\epsilon\rceil\cdot(1-\epsilon)}$ for $\epsilon<1/4$.

Because $\lceil1+1/\epsilon\rceil\cdot(1-\epsilon) = 1+\lceil1/\epsilon\rceil - \epsilon - \epsilon \lceil 1/\epsilon \rceil < \lceil 1/\epsilon \rceil$, it follows that $UB^+ < LB^-$.

%
%
%
%

Thus by comparing the length $ALG$ of the path in $H'$ found by the hypothetical algorithm to $UB^+$ and $LB^-$, we can decide whether $I$ was positive or negative.
%
\end{proof}

Second, to extend the reduction to \entailprob, we argue that $H$ is a faithful representation of $R$ in the sense that acyclic paths from $\alpha\beta$ to $c_{m+1}\gamma$ (or acyclic supersets of such paths) correspond to consistent subsets entailing $\alpha\beta|\gamma$, and vice versa.

\subsection{Consistent Entailing Subset $\Leftarrow$ Acyclic Path}

We prove this direction via two lemmas, proving that the set of triples corresponding to an acyclic path are consistent and entail $\alpha\beta|\gamma$, respectively.

\begin{lemma}\label{lem:consistR}
If there is an acyclic path $P \subseteq A$ from $\alpha\beta$ to $c_{m+1}\gamma$, then $R' = \triples(P)$ is consistent.
\end{lemma}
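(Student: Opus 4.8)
To show $R' = \triples(P)$ is consistent, I want to exhibit a tree (or ultrametric) displaying every triple in $R'$, using the BUILD algorithm's characterization: a triple set is consistent iff the Aho construction succeeds, i.e., iff at every recursive stage the corresponding Ahograph $[R'', L'']$ is disconnected (for $|L''| \ge 2$), so that the leaf set can be repeatedly split. So the strategy is to analyze the structure of the Ahograph $[\triples(P), L(P)]$ and argue that it, and all the Ahographs arising in the recursion, are disconnected, hence BUILD never fails.

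First I would use the hyperarc/triple correspondence of Definition~\ref{def:corresp} to translate the acyclic path $P$ into its triple set $R'$, and observe (via the second dyadic inference rule in form~(\ref{eq:2b})) what the A-edges of the Ahograph look like. Each arc $pq{\to}\{po,qo\}$ contributes a triple $pq|o$, hence an A-edge $\{p,q\}$ labeled $o$. The key geometric idea is that because $P$ is a B-hyperpath from $\alpha\beta$ to $c_{m+1}\gamma$, the hypergraph nodes along $P$ get "merged" in a consistent bottom-up way: turning on $pq|o$ forces the tree nodes $po$ and $qo$ to coincide (the two emphasized facts just before the lemma). I would argue that the acyclicity of $P$ is exactly what prevents these forced identifications from creating a contradictory cycle of merges — the back-arc condition is precisely the obstruction to a coherent ordering of lca-depths.

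Concretely, I expect the cleanest route is to build an explicit ultrametric/ranking on $L(P)$ directly from the acyclic ordering of $P$'s arcs: since $P$ is acyclic, I can topologically order its arcs $a_1, \dots, a_\ell$, and assign each tree-node (each merged pair) a height consistent with this order, so that for every triple $pq|o \in R'$ we get $pq < po = qo$ as required by the 3-point condition in the Observation. The acyclicity guarantees this height assignment is well-defined (no arc's head is forced below a node it must lie above). I would then verify that the sink-node heads (the "spinning wheels") impose no additional constraints, since their triples are trivially satisfiable — they only pin down leaves that appear nowhere else in $R'$.

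\textbf{Main obstacle.} The hard part will be showing the forced identifications are \emph{globally} consistent rather than just locally: two different arcs of $P$ may route through gadgets that share leaves (e.g.\ the shared boundary nodes $b_{i+1}b'_{i+1}$ between consecutive variable gadgets, or the witness-node arcs crossing from a clause gadget into a variable gadget), and I must check that the merges these impose never conflict. This is exactly where acyclicity must be invoked: I would argue that any conflict in the height assignment would correspond to a directed cycle of "must-be-below" relations among tree nodes, i.e.\ a back-arc in $P$, contradicting the hypothesis. Making this precise — pinning down the correspondence between "inconsistent merge" and "back-arc" — is the crux, and I would likely formalize it by defining a partial order on the merged tree-nodes induced by $P$ and showing acyclicity makes it a genuine (extendable) partial order, whose linear extension yields the desired ultrametric.
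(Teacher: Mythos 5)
Your opening strategy (run BUILD on $R'=\triples(P)$ and show every Ahograph arising in the recursion is disconnected, equivalently exhibit an explicit caterpillar/ultrametric ordered by the arcs of $P$) is the same route the paper takes, and the caterpillar picture is indeed the right target. But there is a genuine gap at exactly the point you flag as the crux: you propose to close it by arguing that any conflict in the height assignment ``would correspond to a directed cycle of must-be-below relations, i.e.\ a back-arc in $P$.'' That equivalence is \emph{false} for general acyclic simple paths of 1-2-hyperarcs, so it cannot be derived from acyclicity alone. Concretely, take $a_1=\arc(ab|c)$, $a_2=\arc(ac|d)$, $a_3=\arc(cd|b)$: this is a simple path ($ac\in\text{h}(a_1)$, $cd\in\text{h}(a_2)$) with no back-arc (the heads of $a_3$ are $bc$ and $bd$, neither of which is a tail of $a_1$ or $a_2$), yet $\{ab|c,\ ac|d,\ cd|b\}$ forces $\text{lca}(b,c)=\text{lca}(a,c)<\text{lca}(c,d)<\text{lca}(b,c)$ and is inconsistent. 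The obstruction here is not a back-arc but a \emph{repeated RHS leaf}: the label $b$ introduced by $a_3$ already occurs in an earlier node without that node being a tail.

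What actually makes the lemma true is therefore not acyclicity in the abstract but the specific gadget structure of the constructed $H$, and this is what the paper's proof spends its effort verifying: by inspection of $H$, any $\alpha\beta$-to-$c_{m+1}\gamma$ path decomposes into the connector arcs and the subpaths $P_{x_i}$, $P_{C_j}$; every non-sink node of the induced subhypergraph $H_P$ has out-degree 1 and (using acyclicity, which in \emph{this} $H$ is equivalent to the freshness of each arc's RHS leaf, since the only candidate back-arcs are the clause-to-variable arcs) in-degree 1; hence the Ahograph has $\ell$ A-edges with pairwise distinct labels on $\ell+2$ A-nodes, i.e.\ it is an isolated A-node plus a tree, and the explicit reverse linear ordering of Table~\ref{tbl:linord} shows each BUILD iteration peels off a leaf of that tree and isolates the next label. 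Your plan omits this structural inspection entirely, and without it the step ``no back-arc $\Rightarrow$ globally consistent merges'' does not go through. To repair the proposal you would need to prove, for this particular $H$, that along any source-to-sink path each arc introduces a genuinely new leaf unless the arc is a clause-to-variable back-arc; once that is in hand, either your ultrametric construction or the paper's Ahograph induction finishes the job.
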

\begin{proof}
To prove $R'$ consistent, we 
step through the execution of Aho et al.\ \cite{aho1981inferring}'s BUILD algorithm running on input $R'$, tracking the state of the resulting Ahograph (see Fig.\ \ref{fig:ahograph}) and the tree being constructed (see Fig.\ \ref{fig:pathtree}) as they progress over time.
Each iteration can be interpreted as one application of dyadic inference rule (\ref{eq:2b}), as we step through $P$ in reverse order.

%

Observe that in Ahograph $[R', L(R')]$ (see Fig.\ \ref{fig:ahograph}), $\gamma$ will be an isolated A-node, since $R$ contains no triples with $\gamma$ on their LHSs,
and that the A-edge $\{c_m, c_{m+1}\}$ in $[R', L(R')]$ due to $P$'s last arc (representing $c_m c_{m+1} | \gamma$) is labeled (only) with $\gamma$. Isolated A-node $\gamma$ is removed for BUILD's second iteration, in which we recurse on the remaining A-nodes. The A-edge $\{c_m, c_{m+1}\}$ will therefore no longer appear in that iteration's Ahograph, $[R', L(R') - \{\gamma\}]$, rendering $c_{m+1}$ an isolated A-node.

We claim that this pattern will continue to obtain the rest of the way back to the start of $P$, with a new A-node becoming isolated  by the start of each iteration, causing an A-edge to be removed, and hence isolating another A-node for the succeeding iteration.
More precisely, number the iterations in reverse, counting downward from $\ell=|P|$, where iteration $\ell$ is the first iteration executed (the one in which A-node $o_\ell=\gamma$ is observed isolated and removed for the second iteration, thus also deleting the A-edge $\{p_\ell,q_\ell\}=\{c_m,c_{m+1}\}$).

Let $H_P$ be the subhypergraph of $H$ induced by the nodes $\{\text{t}(e) \cup \text{h}(e) : {e \in P}\}$, and recall that nodes in $H$ correspond to unordered pairs of A-nodes, and thus to potential A-edges in the Ahograph.

Because $P$ is a path (and because there are no arcs between two witness paths or between the two sides of a variable gadget), each non-sink node in $H_P$ will have out-degree 1 ($c_{m+1}\gamma$ has out-degree 0). Therefore $[R', L(R')]$ will have exactly $\ell$ A-edges, each with one label.
%
%
Because $P$ is acyclic, each node in $H_P$ except $\alpha\beta$ (which has in-degree 0) will have in-degree 1. Therefore no label will appear on multiple A-edges in $[R', L(R')]$.
Thus there is a bijection between Ahograph $[R', L(R')]$'s $\ell$ A-edges and the labels appearing on them.

Node $\alpha\beta$ contains two leaves, and each arc in $P$ introduces one additional leaf, for a total of $\ell+2$ leaves in $L(R')$, and hence $\ell+2$ A-nodes in the Ahograph.

Note also that the first arc in $P$ adds label $b_1$ to A-edge $\{\alpha,\beta\}$ in the Ahograph, and that each subsequent arc adds a label to an A-edge incident to the preceding A-edge, and so all $\ell$ of the Ahograph's A-edges lie within a single component. Since isolated A-node $\gamma$ is one of the $\ell+2$ A-nodes, this means that the other component has exactly $\ell+1$ A-nodes, and is therefore a tree.

%

Now, we prove by induction that at the start of each iteration $k$ from $\ell$ down to $1$, the Ahograph consists of two components: some isolated A-node $o_k$ and a tree. We have already verified the base case: $o_\ell$ is isolated at the start, in iteration $\ell$, and the $\ell+1$ remaining nodes form a tree. Assume the claim is true for each iteration from $\ell$ down through some iteration $k$.

By inspection of $H$, observe that $P$ can be decomposed into a sequence of subpaths:
\begin{align*}
P &~=~ (a_1,a_2) \circ P_{x_1} \circ P_{x_2} \circ \cdots \circ P_{x_n} \circ  (a_3,a_4) \circ P_{C_1} \circ P_{C_2} \circ \cdots \circ P_{C_m},
\end{align*}
where $P_{x_i}$ is one of the two length-$2m{+}2$ subpaths passing through $x_i$'s gadget (from $b_ib'_i$ to $b_{i+1}b'_{i+1}$), $P_{C_j}$ is one of the three length-4 subpaths traversing one of the $C_j$ gadget's three witness paths (from $c_jd_j$ to $c_jc_{j+1}$) {\em plus the succeeding arc with tail $c_jc_{j+1}$}, and $a_1,...,a_4$ are shorthand names for the other four connecting arcs.

For each $C_j$, let $\hat i_j$ denote the index $i$ of the variable $x_i$ appearing in the witness path of $C_j$ (through literal appearance $\tilde x_{\hat i_j}^j$) that $P$ traverses.

Now, consider the {\em reverse linear ordering} shown in Table \ref{tbl:linord} of the A-node-labeled A-edges of Ahograph $[R', L(R')]$, with each subsequence of A-node-labeled A-edges tagged to indicate the corresponding subpath of $P$ (compare to Figs.\ \ref{fig:ahograph} and \ref{fig:overview}, and in particular compare each ($P_{C_j}$) to Fig.\ \ref{fig:clause} and each ($P_{x_i}$) to Fig.\ \ref{fig:var}). 

\begin{table}[t]
\caption{Reverse linear ordering of the A-edges corresponding to arcs in path $P$ from $\alpha\beta$ to $c_{m+1}\gamma$.}
\vskip -.45cm
\centering
\begin{alignat}{7}
(~&\{c_m,c_{m+1}\}:\gamma, ~~&&\{c_m,\tilde y_{\hat i_m}^m\}:c_{m+1}, ~~&&\{c_m,\tilde x_{\hat i_m}^m\}:\tilde y_{\hat i_m}^m, ~~&&\{c_m,d_m\}:\tilde x_{\hat i_m}^m,\label{eq:clausem}\tag{$P_{C_m}$}\\
&\{c_{m-1},c_m\}:d_m, ~~&&\{c_{m-1},\tilde y_{\hat i_{m-1}}^{m-1}\}:c_m, ~~&&\{c_{m-1},\tilde x_{\hat i_{m-1}}^{m-1}\}:\tilde y_{\hat i_{m-1}}^{m-1}, ~~&&\{c_{m-1},d_{m-1}\}:\tilde x_{\hat i_{m-1}}^{m-1},\label{eq:clausemm1}\tag{$P_{C_{m-1}}$}\\
& ...,\nonumber\\
&\{c_1,c_2\}:d_2, ~~&&\{c_1,\tilde y_{\hat i_1}^1\}:c_2, ~~&&\{c_1,\tilde x_{\hat i_1}^1\}:\tilde y_{\hat i_1}^1, ~~&&\{c_1,d_1\}:\tilde x_{\hat i_1}^1,\label{eq:clause1}\tag{$P_{C_1}$}\\
&\{b'_{n+1},c_1\}:d_1, ~~&&\{b_{n+1},b'_{n+1}\}:c_1,\label{eq:var2clause}\tag{$a_3,a_4$}\\
&\{\tilde y_n^m,b_{n+1}\}:b'_{n+1}, ~~&&\{\tilde x_n^m,\tilde y_n^m\}:b_{n+1}, ~~&&\{\tilde y_n^{m-1},\tilde x_n^m\}:\tilde y_n^m, ~ \quad..., ~~&&\{\tilde x_{n}^1,\tilde y_{n}^1\}:\tilde x_{n}^2,\nonumber\\
&\quad\{b'_n,\tilde x_n^1\}:\tilde y_n^1, ~~&&\{b_n,b'_n\}:\tilde x_n^1,\label{eq:varn}\tag{$P_{x_n}$}\\
&\{\tilde y_{n-1}^m,b_n\}:b'_n, ~~&&\{\tilde x_{n-1}^{m}:b_n,\tilde y_{n-1}^{m}\}, ~~&&\{\tilde y_{n-1}^{m-1},\tilde x_{n-1}^m\}:\tilde y_{n-1}^m,~~ ..., ~~&&\{\tilde x_{n-1}^1,\tilde y_{n-1}^1\}:\tilde x_{n-1}^2,\nonumber\\
&\quad\{b'_{n-1},\tilde x_{n-1}^1\}:\tilde y_{n-1}^1, ~~&&\{b_{n-1},b'_{n-1}\}:\tilde x_{n-1}^1,\label{eq:varnm1}\tag{$P_{x_{n-1}}$}\\
&...,\nonumber\\
&\{\tilde y_{1}^m,b_{2}\}:b'_{2}, ~~&&\{\tilde x_{1}^m,\tilde y_{1}^m\}:b_{2}, ~~&&\{\tilde y_1^{m-1}:\tilde y_1^m,\tilde x_1^m\},~ \quad..., ~~&&\{\tilde x_1^1,\tilde y_1^1\}:\tilde x_1^2,\nonumber\\
&\{b'_1,\tilde x_1^1\}:\tilde y_1^1, ~~&&\{b_1,b'_1\}:\tilde x_1^1,\label{eq:var1}\tag{$P_{x_1}$}\\
&\{\beta,b_1\}:b'_1, ~~&&\{\alpha,\beta\}:b_1~)\tag{$a_3,a_4$}
\end{alignat}
\vskip -.25cm
\label{tbl:linord}
\end{table}

Observe the relationship between each {\em A-edge$:$label} entry in the ordering and its predecessor: for each entry $k-1$ for $k \le \ell$, its label $o_{k-1}$ is one of the two A-nodes occurring in entry $k$'s (i.e., the preceding entry's) A-edge, say $p_k$, and crucially, $p_k$ has no further appearances in the list past entry $k-1$. That is, {\em at the start of iteration $k$, $p_k$ is a leaf in the Ahograph}.

Therefore iteration $k$'s removal of the then-isolated A-node $o_k$ and of the (only) label from A-edge $\{p_k,p_k'\}$ disconnects the component that had contained that A-edge in two, thus isolating $p_k = o_{k-1}$ in time for iteration $k-1$. 
The remainder of that same component is of course also a tree, thus proving the inductive claim for iteration $k-1$.


Since the BUILD algorithm running on $R'$ therefore never encounters a connected Ahograph, it follows \cite{bryant1995extension} that $R'$ is consistent.
%
%
%
%
%
%
%
%
%
%
%
\end{proof}

\begin{lemma}\label{lem:isentailed}
If there is an acyclic path $P \subseteq A$ from $\alpha\beta$ to $c_{m+1}\gamma$, then $R' = \triples(P)$ entails $\alpha\beta|\gamma$.
\end{lemma}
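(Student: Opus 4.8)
The plan is to transport the right-hand side $\gamma$ backwards along $P$ by repeated application of the dyadic inference rule (\ref{eq:2b}), converting the single membership fact about $P$'s last arc into the target entailment $\alpha\beta|\gamma$. First I would fix notation matching the path structure: write $P=(a_1,\dots,a_\ell)$ with $u_0=\text{t}(a_1)=\alpha\beta$, with $u_k=\text{t}(a_{k+1})\in\text{h}(a_k)$ for $1\le k\le\ell-1$, and with the destination $u_\ell=c_{m+1}\gamma\in\text{h}(a_\ell)$. By Def.~\ref{def:corresp} each arc has the form $a_k = u_{k-1}{\to}\{p_ko_k,q_ko_k\}$ with $u_{k-1}=p_kq_k$ and common leaf $o_k$, so that $\triple(a_k)=u_{k-1}|o_k\in R'$. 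In particular $P$'s last arc represents $c_mc_{m+1}|\gamma$, so that $o_\ell=\gamma$ is exactly the leaf we wish to transport.

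Before the induction I would record the two structural facts that make rule (\ref{eq:2b}) applicable at every step. Since $u_k$ is one of $a_k$'s two heads $\{p_ko_k,q_ko_k\}$, it contains the common leaf, giving $o_k\in u_k$; and since $u_{k-1}=\{p_k,q_k\}$ while $u_k\in\{\{p_k,o_k\},\{q_k,o_k\}\}$ with $o_k\notin\{p_k,q_k\}$, we have $|u_{k-1}\cap u_k|=1$. I would also note that $\gamma$ occurs among the nodes of $P$ only inside the destination $u_\ell$, so $\gamma\notin u_{k-1}$ for every $k\le\ell$ and each triple $u_{k-1}|\gamma$ appearing below is a well-formed triple (three distinct leaves).

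Then the core is a downward induction establishing the claim $R'\vdash u_{k-1}|\gamma$ for $k=\ell,\ell-1,\dots,1$. The base case $k=\ell$ is immediate, since $u_{\ell-1}|\gamma=u_{\ell-1}|o_\ell=\triple(a_\ell)\in R'$. For the inductive step, given the claim for $k+1$ (namely $R'\vdash u_k|\gamma$), I would apply rule (\ref{eq:2b}) to the two premises $u_{k-1}|o_k\in R'$ and $u_k|\gamma$, whose side conditions $o_k\in u_k$ and $|u_{k-1}\cap u_k|=1$ were just verified. By soundness of the rule, every tree displaying both premises displays $u_{k-1}|\gamma$; since every tree in $\langle R'\rangle$ displays $u_{k-1}|o_k$ (a member of $R'$) and, by the induction hypothesis, displays $u_k|\gamma$, it follows that $R'\vdash u_{k-1}|\gamma$. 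Taking $k=1$ yields $R'\vdash u_0|\gamma=\alpha\beta|\gamma$, as required; and since $R'$ is consistent by Lemma~\ref{lem:consistR}, this is a genuine consistent subset witnessing the entailment.

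The main obstacle is not the induction itself but the bookkeeping justifying that (\ref{eq:2b}) applies uniformly at every step: one must be certain that the first premise's RHS $o_k$ really lies in the next tail $u_k$, and that $u_{k-1}$ and $u_k$ overlap in exactly one leaf. Both facts hinge on the $1$-$2$-hyperarc encoding $pq{\to}\{po,qo\}$ (Def.~\ref{def:abstractcorresp}) rather than on any global property of $P$. It is worth emphasizing that acyclicity of $P$ plays no role in this lemma --- the forward-chaining argument goes through for an arbitrary path; acyclicity was needed only to guarantee consistency in Lemma~\ref{lem:consistR}.
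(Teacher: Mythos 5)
Your proposal is correct and follows essentially the same route as the paper's proof: both transport $\gamma$ backwards along $P$ via $\ell-1$ applications of dyadic inference rule (\ref{eq:2b}), starting from the last arc's triple $c_mc_{m+1}|\gamma$ and ending at $\alpha\beta|\gamma$ (your version merely packages the chain as an explicit downward induction, checks the rule's side conditions $o_k\in u_k$ and $|u_{k-1}\cap u_k|=1$, and uses a shifted tail-indexing convention). Your closing remark that acyclicity is needed only for Lemma~\ref{lem:consistR}, not for this forward-chaining step, is also consistent with how the paper structures the argument.
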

\begin{proof}
Let $\ell=|P|$, and for each $k \in [\ell]$, let $a_k = u_k{\to}\{v_k,v_k'\}$ denote the $k$th arc in $P$, with $u_k = p_kq_k$ and $\{v_k,v_k'\} = \{p_ko_k, q_ko_k \}$ (for some leaves $p_k,q_k,o_k$), representing triple $p_k q_k | o_k = u_k | o_k$. The names $v_\ell,v_\ell'$ can be assigned so that $v_\ell = c_{m+1}\gamma$ (and $v_\ell'$ is $a_\ell$'s non-destination sink node $c_m \gamma$), but we can also simply represent $a_\ell$ as $u_\ell|o_\ell = c_mc_{m+1} | \gamma$.
Similarly, for each $k \in [2,...,\ell]$, we can assign the names $v_{k-1},v_{k-1}'$ so that $v_{k-1}=u_k$, i.e., $\text{t}(a_k) = u_k = v_{k-1} \in \text{h}(a_{k-1})$,
and we can also simply represent $a_{k-1}$ as $u_{k-1}|o_{k-1}$.

Now consider arcs $a_{\ell-1}$ and $a_\ell$, and observe that their representations as $u_{\ell-1}|o_{\ell-1}$ and $u_\ell|o_\ell$ (respectively) are 
of exactly the form that dyadic inference rule (\ref{eq:2b}) applies to: via that inference rule, we can derive the triple $u_{\ell-1}|o_\ell$, which have the form $c_m \tilde y_i^m | \gamma$ (for some $i$). 

%

Similarly, by applying this inference rule to the triple represented $P$'s third-to-last arc (i.e., $u_{\ell-2} | o_{\ell-2}$) and the derived triple  $u_{\ell-1} | o_\ell$, we can derive $u_{\ell-2} | o_\ell$. Via a total of $\ell-1$ such applications of dyadic inference rule (\ref{eq:2b}) (each corresponding to one BUILD iteration), we can derive $u_1 | o_\ell = p_1 q_1 | o_\ell$ = $\alpha \beta | \gamma$.
%
\end{proof}

Thus we have:
\begin{corollary}[$\Leftarrow$]\label{cor:consentails}
If there is an acyclic path $P \subseteq A$ from $\alpha\beta$ to $c_{m+1}\gamma$, then $R' = \triples(P)$ is consistent and entails $\alpha\beta|\gamma$.
\end{corollary}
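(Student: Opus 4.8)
The plan is to obtain the corollary as an immediate conjunction of the two preceding lemmas, applied to the \emph{same} triple set. Given an acyclic path $P \subseteq A$ from $\alpha\beta$ to $c_{m+1}\gamma$, I set $R' = \triples(P)$. Lemma~\ref{lem:consistR} guarantees that this $R'$ is consistent, and Lemma~\ref{lem:isentailed} guarantees that this same $R'$ entails $\alpha\beta|\gamma$. Since both conclusions concern the identical subset $\triples(P)$, no reconciliation between two different witnesses is needed: a single consistent subset of $R$ both is consistent and yields the target triple, which is exactly what the corollary asserts.

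The one point that deserves explicit mention is the passage from the \emph{syntactic} derivation produced in Lemma~\ref{lem:isentailed} to the \emph{semantic} entailment required by the definition of $\vdash$. Lemma~\ref{lem:isentailed} derives $\alpha\beta|\gamma$ from $R'$ through $\ell-1$ applications of the dyadic inference rule~(\ref{eq:2b}); to conclude $R' \vdash \alpha\beta|\gamma$ in the definitional sense (that every tree in $\langle R' \rangle$ displays $\alpha\beta|\gamma$) I would invoke the soundness of the dyadic inference rules, established by Bryant and Steel~\cite{bryant1995extension}. Because Lemma~\ref{lem:consistR} already ensures $\langle R' \rangle \ne \varnothing$, soundness implies that any triple derivable from $R'$ by these rules is displayed by every tree in $\langle R' \rangle$, so the derivation indeed certifies $R' \vdash \alpha\beta|\gamma$.

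There is essentially no hard step here; the substantive work was already done in the two lemmas. If I were to flag any potential pitfall, it is only the implicit requirement that the \emph{same} $R'$ serve as the witness for both properties---which holds trivially because both lemmas take $R' = \triples(P)$ for the given acyclic $P$---together with the (routine) reliance on soundness above. I would therefore present the corollary's proof in a single sentence: for the acyclic path $P$, the set $R' = \triples(P)$ is consistent by Lemma~\ref{lem:consistR} and entails $\alpha\beta|\gamma$ by Lemma~\ref{lem:isentailed}, which completes the ($\Leftarrow$) direction.
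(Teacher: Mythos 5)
Your proposal matches the paper exactly: the corollary is stated immediately after Lemmas~\ref{lem:consistR} and~\ref{lem:isentailed} as their direct conjunction for the same set $R'=\triples(P)$, with no further argument. Your added remark on the soundness of the dyadic inference rules is a harmless (and correct) elaboration of why the syntactic derivation in Lemma~\ref{lem:isentailed} yields entailment, but it does not change the route.
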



\begin{figure}
\centering
	\begin{subfigure}{\textwidth}
\centering
\makebox[1\textwidth][c]{ 
\includegraphics[width=1\textwidth]{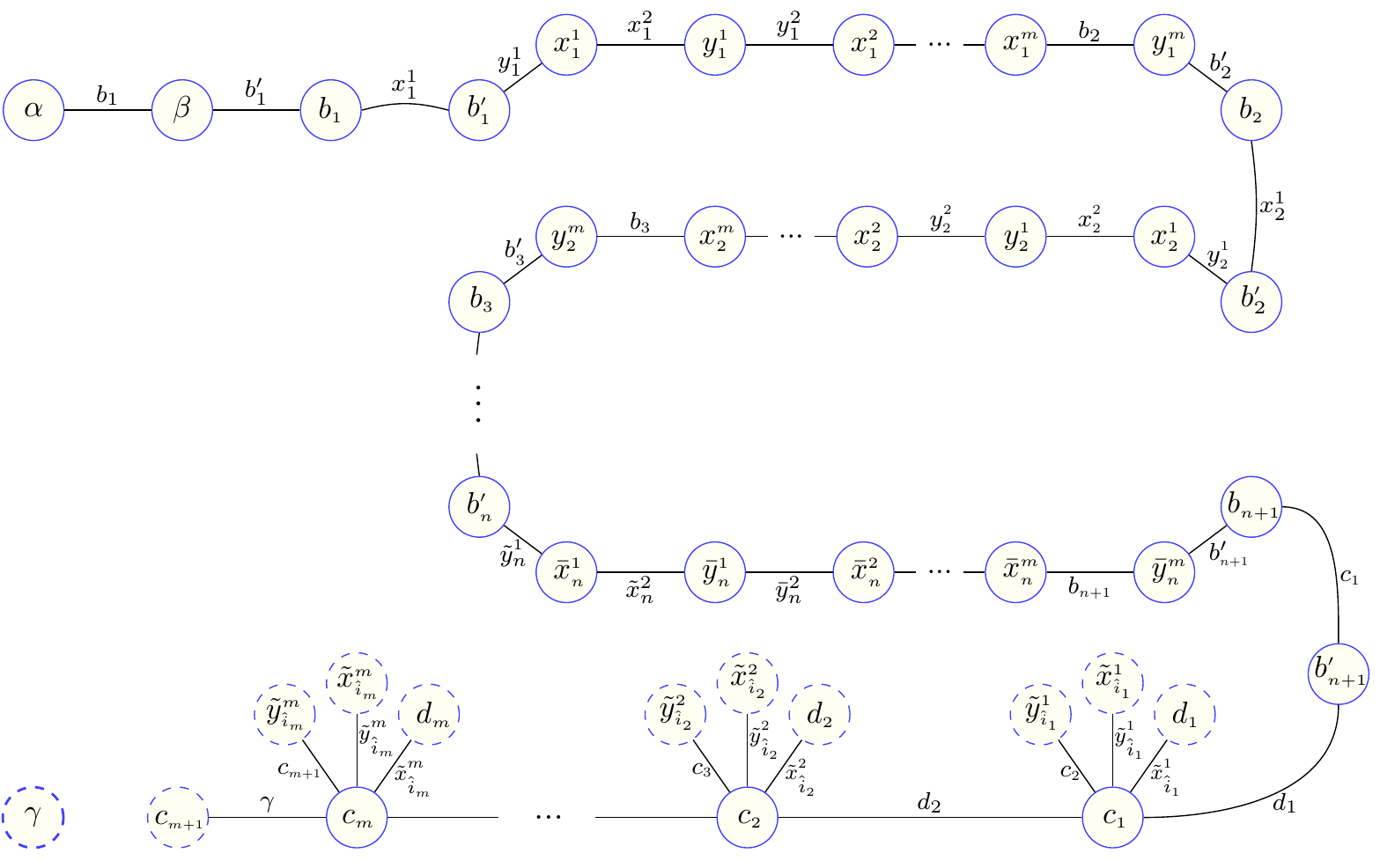}
}
\caption{The initial Ahograph $[R',L(R')]$ constructed based on an underlying {\em acyclic} path $P$. Because $P$ is acyclic, the node  $\tilde x_{\hat i_j}^j \tilde y_{\hat i_j}^j$ pointed to by each $C_j$'s witness node $c_j \tilde x_{\hat i_j}^j$ is a {\em sink node in $H$}, and thus A-nodes $\tilde y_i^j$ and $\tilde x_i^j$ are {\em leaves in the Ahograph}. As a result, at each BUILD iteration, the Ahograph has two components, a tree and an isolated node.
%
\label{fig:ahograph}
}
\end{subfigure}
\vskip .35cm
\centering
	\begin{subfigure}{\textwidth}
\centering
\includegraphics[width=.75\textwidth]{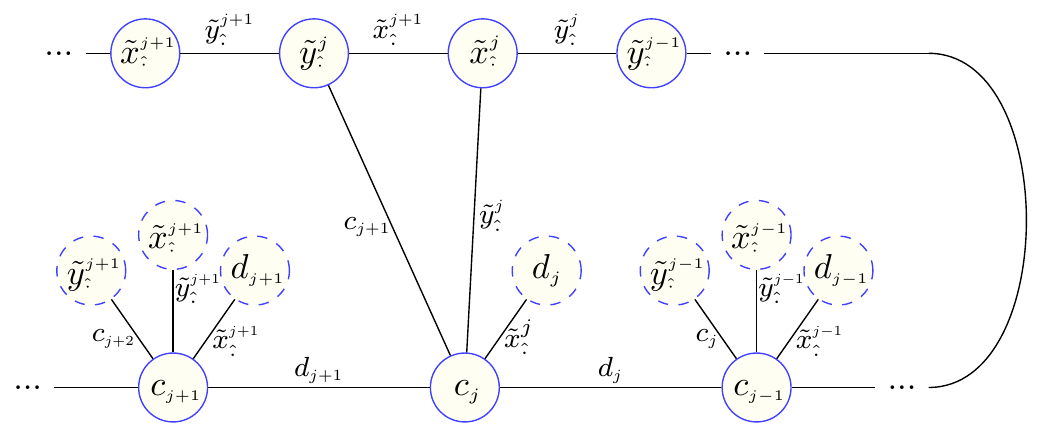}
\caption{A fragment of the Ahograph resulting from the cause of a cycle in $P$, viz., a back-arc in which some clause $C_j$'s witness node $c_j \tilde x_{\hat i_j}^j$ points to a {\em non-sink} node $\tilde x_{\hat i_j}^j \tilde y_{\hat i_j}^j$ that lies earlier in the path $P$ (specifically, within $x_{\hat i_j}$'s gadget), and thus A-nodes $\tilde y_{\hat i_j}^j$ and $\tilde x_{\hat i_j}^j$ are {\em non-leaves} in the Ahograph. This yields the triangle $c_j-\tilde y_{\hat i_j}^j - \tilde x_{\hat i_j}^j$. After the BUILD iteration in which A-node $c_{j+1}$ and A-edge $\{c_j, \tilde y_{\hat i_j}^j\}$ are removed, $\tilde y_{\hat i_j}^j$ will still {\em not} be isolated, and hence the Ahograph will at that point consist of a single component, indicating $R'$ is inconsistent.
\label{fig:ahocycle}
}
\end{subfigure}
\caption{The Ahograph $[R',L(R')]$ is constructed when Aho et al.'s BUILD algorithm is executed on the triples $R' = \triples(P)$ corresponding to the path $P$ from $\alpha\beta$ to $c_{m+1}\gamma$. The existence of any cycle within $P$ involving some clause $C_j$'s witness node $c_j \tilde x_{\hat i_j}^j$ pointing to $\tilde x_{\hat i_j}^j \tilde y_{\hat i_j}^j$ (i.e., arc $c_j \tilde x_{\hat i_j}^j \to \{c_j \tilde y_{\hat i_j}^j, \tilde x_{\hat i_j}^j \tilde y_{\hat i_j}^j\}$) will be reflected in the locations in the Ahograph of A-nodes $\tilde y_{\hat i_j}^j$ and $\tilde x_{\hat i_j}^j$.}
\end{figure}

\begin{figure}
\centering
\includegraphics[width=1\textwidth]{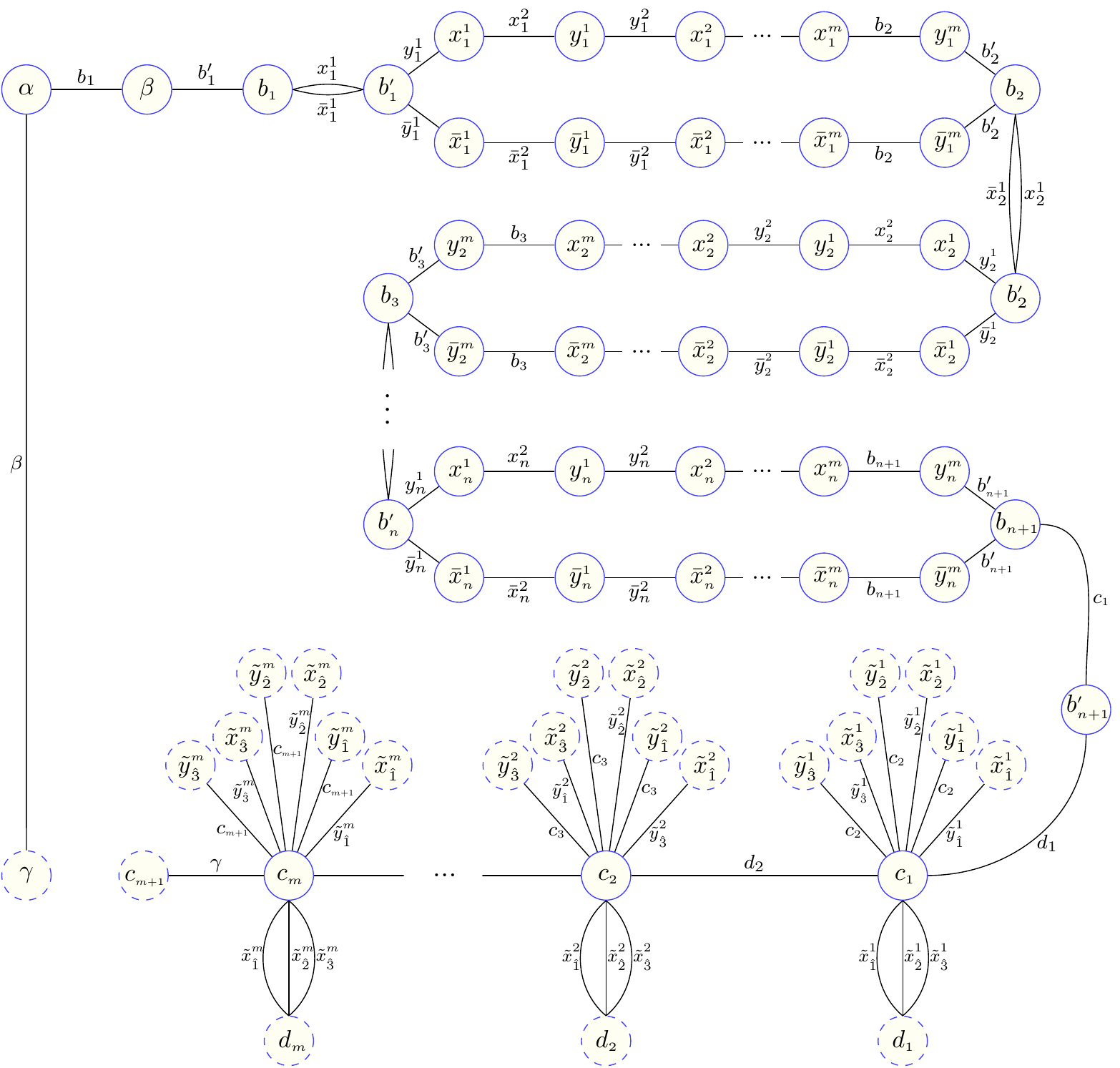}
\vskip -,15cm
\caption{The Ahograph $[R',L(R')]$ for $R'=\triples(A')$, where $A'$ is a maximal acyclic subset of $A$ and the three variable gadgets shown (for $x_1,x_2,x_n$) are each maximal in the sense that both sides are included and the three clause gadgets shown (for $C_1,C_2,C_m$) are each maximal in the sense that all three witness paths are included. (Hence all the appearances within clauses $C_1,C_2,C_m$ must be of variables $x_3,...,x_{n-1}$.) If there is no disconnection between either $b_2$ and $b'_n$ or $c_2$ and $c_m$, then the Ahograph is connected and $R'$ is inconsistent.\label{fig:ahoincompgr}
}
%
%
\vskip .35cm
\centering
\includegraphics[width=.95\textwidth]{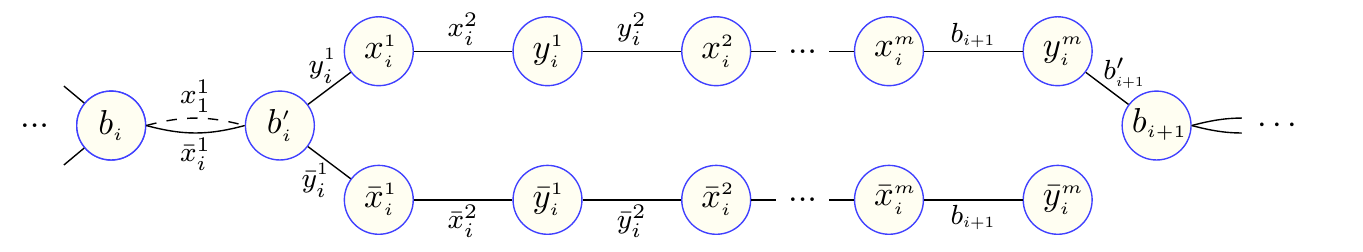}
\vskip -,25cm
\caption{A fragment of the Ahograph corresponding to a variable $x_i$'s gadget (see Fig.\ \ref{fig:var}) if its only two arcs missing from $A'$ are $\arcs(b_ib'_i|x_i^1, ~\bar y_i^m|b_{i+1}|b'_{i+1})$, with the A-edge we consider hypothetically adding in the proof of Lemma \ref{lem:degeneratecutpath} shown dashed.\label{fig:ahodiscordvar}
}
\end{figure}

\subsection{Consistent Entailing Subset $\Rightarrow$ Acyclic Path}

Now we argue for the reverse direction, 
proving through a series of lemmas that if there is no acyclic $\alpha\beta{-}c_{m+1}\gamma$ path, then there will be no consistent triple subset entailed $\alpha\beta|\gamma$.

\begin{lemma}\label{lem:cycpath}
%
Let $A' \subseteq A$. Suppose there exists a {\em cyclic} path $P \subseteq A'$ from $\alpha\beta$ to $c_{m+1}\gamma$. Then $R' = \triples(A')$ is inconsistent.
\end{lemma}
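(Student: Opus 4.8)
The plan is to reduce the claim to the single cycle hidden inside $P$ and then give a direct tree-based contradiction, rather than tracing the BUILD algorithm. First I would note that it suffices to prove $\triples(P)$ inconsistent: since $P \subseteq A'$ we have $\triples(P) \subseteq \triples(A') = R'$, and any superset of an inconsistent triple set is inconsistent (a tree displaying all of $R'$ would display every subset, so if a subset has no tree then neither does $R'$). Next, because $P$ is cyclic it contains a back-arc $a_k$ with $\text{h}(a_k) \ni \text{t}(a_{k'})$ for some $k'<k$; the contiguous sub-path $Q=(a_{k'},a_{k'+1},\ldots,a_k)$ of the simple path $P$ is then itself a cycle, satisfying $\text{t}(a_{r+1})\in\text{h}(a_r)$ for consecutive arcs together with the closing condition $\text{t}(a_{k'})\in\text{h}(a_k)$. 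So it is enough to show $\triples(Q)$ is inconsistent.

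I would then argue by contradiction using lowest common ancestors. Suppose some tree $T$ displayed every triple in $\triples(Q)$. Recall (via Definition~\ref{def:corresp} and the emphasized facts preceding it) that an arc $pq{\to}\{po,qo\}$ encodes $pq|o$, so if $T$ displays it then $\text{lca}(p,q) < \text{lca}(p,o) = \text{lca}(q,o)$: the two heads $po,qo$ collapse to a single tree node, and the tail $pq$ lies strictly below it. Restated abstractly, for every arc $a=u{\to}\{v,v'\}$ of $Q$, the tree node corresponding to its tail $u$ is a \emph{proper descendant} of the (single, shared) tree node to which both heads $v,v'$ refer; in particular the tail's tree node lies strictly below the tree node of each head of $a$.

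Now I would chain these strict inequalities around $Q$. Writing $b_1,\ldots,b_s$ for the arcs of $Q$ in order, the path condition $\text{t}(b_{r+1})\in\text{h}(b_r)$ makes $\text{t}(b_{r+1})$ a head of $b_r$, so the tree node of $\text{t}(b_r)$ is a proper descendant of that of $\text{t}(b_{r+1})$, i.e.\ $\text{t}(b_r) < \text{t}(b_{r+1})$ as tree nodes, for each $r<s$; the closing condition $\text{t}(b_1)\in\text{h}(b_s)$ likewise gives $\text{t}(b_s) < \text{t}(b_1)$. Since the proper-descendant relation $<$ is a strict partial order (transitive and irreflexive), concatenating these yields $\text{t}(b_1) < \text{t}(b_1)$, a contradiction. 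Hence no such $T$ exists, $\triples(Q)$ is inconsistent, and therefore so is its superset $R'=\triples(A')$.

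The step I would flag as the crux is the passage from ``$a$ is displayed by $T$'' to the strict inequality ``tail strictly below both heads'': this is precisely where the $1$-$2$-hyperarc structure $u{\to}\{v,v'\}$ (with $v\cap v'$ the RHS leaf) is used, since displaying $pq|o$ forces $po$ and $qo$ to be one tree node strictly above $pq$. Everything else is formal order-chasing, and the cycle need not be short (cycles here have $s\ge 2$ arcs, as $u,v,v'$ are distinct). I would also remark that this argument is purely combinatorial in the hypergraph and never invokes the gadget structure, so it establishes the general fact that the triple set of \emph{any} cyclic $1$-$2$-hyperarc set is inconsistent; the Ahograph picture of Fig.~\ref{fig:ahocycle}---the triangle $c_j$--$\tilde x_{\hat i_j}^j$--$\tilde y_{\hat i_j}^j$ that BUILD can never disconnect---is an equivalent, more operational way to witness the same obstruction.
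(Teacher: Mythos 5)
Your proof is correct, but it takes a genuinely different route from the paper's. The paper argues operationally and construction-specifically: it observes that in the constructed hypergraph $H$ a cycle can only arise from a clause $C_j$'s witness node pointing back into an already-traversed variable gadget, traces the resulting triangle $c_j$--$\tilde x_{\hat i_j}^j$--$\tilde y_{\hat i_j}^j$ in the Ahograph (Fig.~\ref{fig:ahocycle}), and concludes that BUILD encounters a connected Ahograph (since $\tilde y_{\hat i_j}^j$ fails to become isolated), hence $R'$ is inconsistent. You instead extract the cycle $Q$ hidden behind the back-arc, use the paper's own observation that displaying $pq|o$ forces $\text{lca}(p,q) < \text{lca}(p,o)=\text{lca}(q,o)$ --- so the tail of every arc of the form $pq{\to}\{po,qo\}$ sits strictly below both of its heads in any displaying tree --- and chain these strict descendant relations around $Q$ to contradict irreflexivity. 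Both arguments are sound (and your handling of the subtleties is right: passing to the subset $\triples(Q)$, noting that distinct hypergraph nodes may map to the same tree node without harming strictness, and that $u,v,v'$ distinct rules out one-arc cycles). What each buys: the paper's version is the one that generalizes to the rest of its case analysis, since the same Ahograph bookkeeping is reused in Lemmas~\ref{lem:consistR}, \ref{lem:treeexs} and \ref{lem:degeneratecutpath}; your version is shorter, fully rigorous without appeal to the gadget layout, and establishes the strictly stronger general fact that $\triples(A'')$ is inconsistent for \emph{any} cyclic set $A''$ of $1$-$2$-hyperarcs of this form --- which is arguably the cleaner statement to have on record, since the paper's proof as written leans on an unproved claim that all cycles in $H$ have the particular clause-to-variable shape.
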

\begin{proof}
Intuitively, it can be seen that any cycle in $H$ corresponds to a vicious (inconsistent) cycle of rooted triples. In particular (see Fig.\ \ref{fig:ahocycle}), any clause $C_j$'s component whose witness node points to a variable gadget node already visited on the path (causing a cycle in $H$) will induce a triangle {\em in the Ahograph}, resulting in a single-component Ahograph in the BUILD iteration following the deletion of isolated node $c_{j+1}$ (if no single-component Ahograph has yet been encountered in an earlier iteration), since at that point A-node $\tilde y_{\hat i_j}^j$ will have failed to become isolated.
\end{proof}

Most of the remainder of this subsection will be dedicated to showing {\em constructively} that if $A'$ contains no path from $\alpha\beta$ to $c_{m+1}\gamma$ at all, cyclic or otherwise, then $R'$ does not entail $\alpha\beta|\gamma$. We do so by showing that in the case of such a (consistent) $R'$, there exist trees displaying $R' \cup \{\alpha\beta|\gamma\}$. Therefore assume w.l.o.g.\ that $R'$ is consistent and maximal in the sense that adding any other triple of $R$ to it would either make $R'$ inconsistent or would introduce an $\alpha\beta{-}c_{m+1}\gamma$ path in $A'=\arcs(R')$.
%

Observe that the missing arcs $A^\times = A - A'$ can be thought of as the (source side to sink side) cross arcs of a cut separating source $\alpha\beta$ and sink $c_{m+1}\gamma$.
In the following argument we will refer to hypergraph $H^\gamma = (V \cup \{ \gamma\alpha\}, A' \cup \arc(\gamma\alpha|\beta))$ and its corresponding Ahograph $G^\gamma$.

\begin{figure}
\centering
\makebox[1\textwidth][c]{ 
   \centering
\ifwabi
   \begin{subfigure}[t]{.14\linewidth}
\clipbox{6.15cm .15cm 5.65cm .5cm}{      \includegraphics[height=.535\textheight,angle=33]{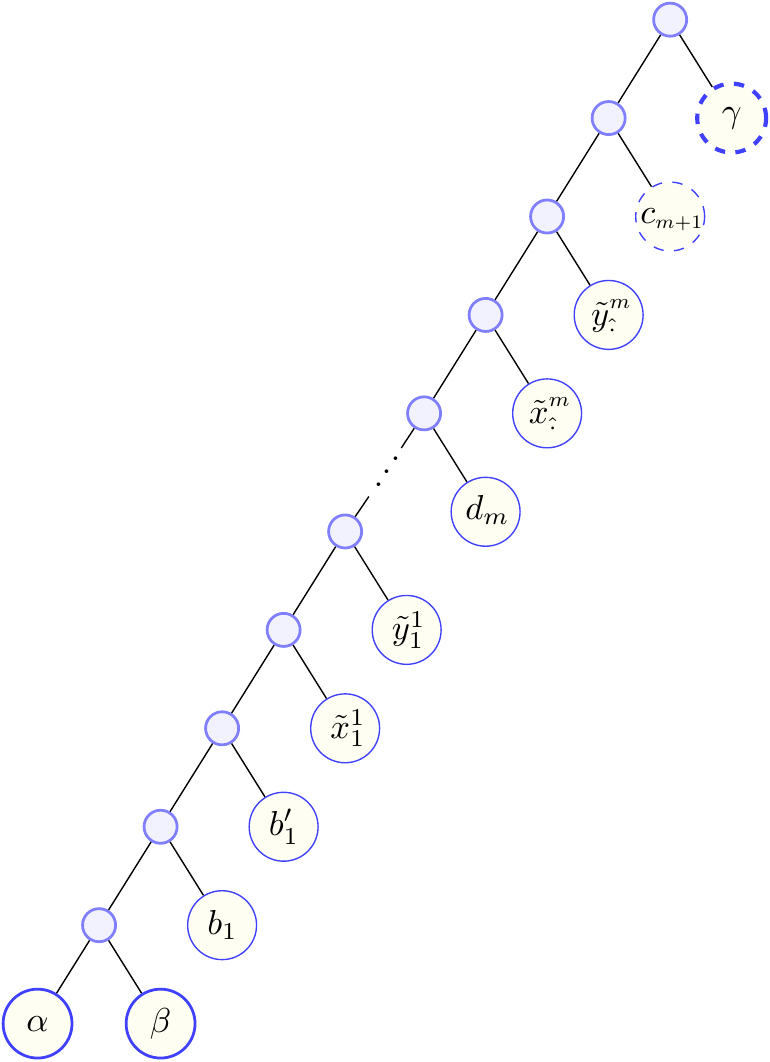}  }
\else
   \begin{subfigure}[t]{.12\linewidth}
\clipbox{6.25cm .15cm 5.75cm .5cm}{      \includegraphics[height=.535\textheight,angle=31.75]{rt-fig-pathtree-arx.pdf}  }
\fi
\caption{$A'$ is an acyclic path from $\alpha\beta$ to $c_{m+1}\gamma$.
\label{fig:pathtree}
}
  \end{subfigure}
\ifwabi
\hskip .3cm
  \begin{subfigure}[t]{.26\linewidth}
\else
\hskip .35cm
  \begin{subfigure}[t]{.28\linewidth}
\fi
    \centering
\ifwabi
\clipbox{2.5cm .15cm 1.5cm 1.4cm}{      \includegraphics[height=.21\textheight,angle=35.6]{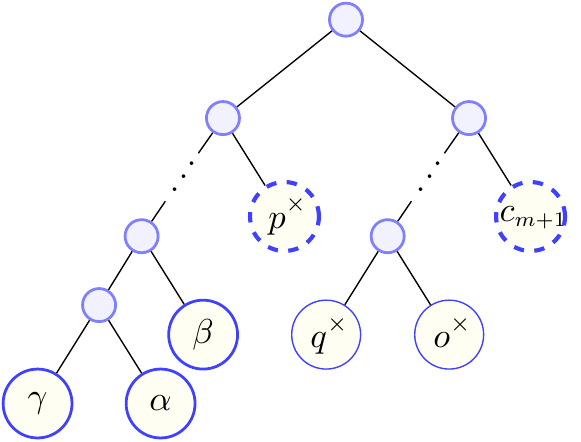} }
\else
\clipbox{2.6cm .15cm 1.6cm 1.45cm}{      \includegraphics[height=.22\textheight,angle=33.6]{rt-fig-cuttreeforced-arx.pdf} }
\fi
\caption{$A^\times$ contains some forced arc $a_\times$.
\label{fig:cuttreeforced}
}
  \end{subfigure}
\ifwabi
\hskip .35cm
\else
\hskip .35cm
\fi  \begin{subfigure}[t]{.28\linewidth}
    \centering
\ifwabi
\clipbox{5.45cm .15cm 4.45cm 1.4cm}{      \includegraphics[height=.455\textheight,angle=34.6]{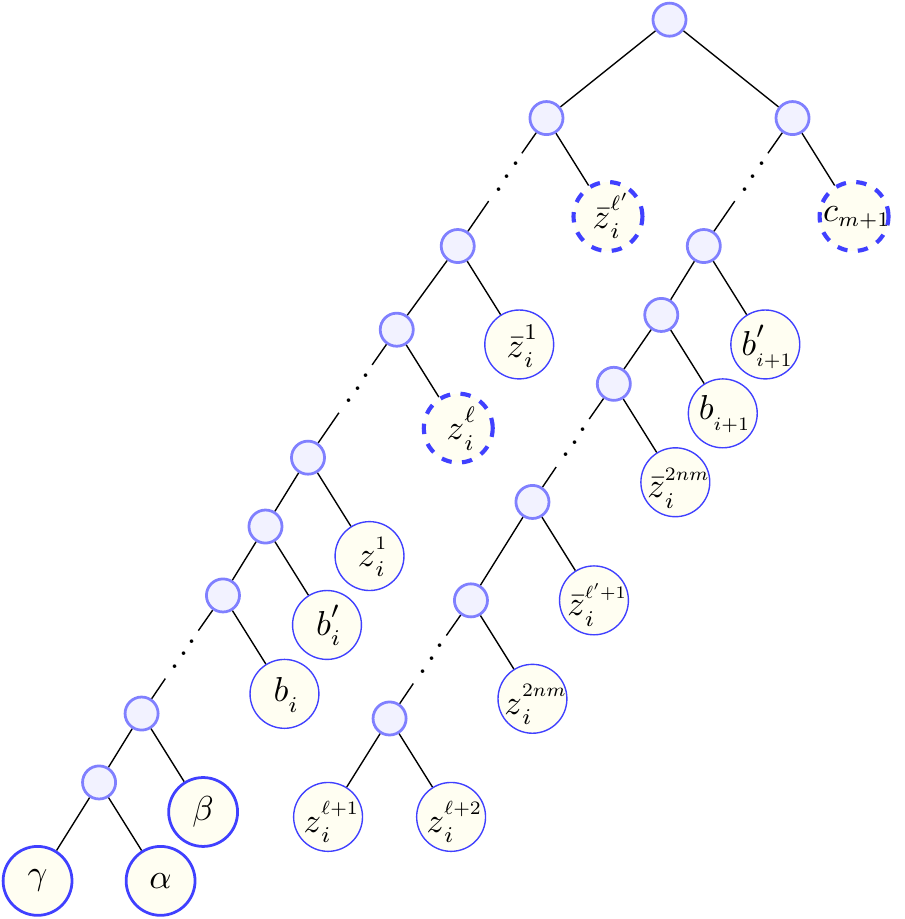} }
\else
\clipbox{5.75cm .15cm 4.7cm 1.25cm}{      \includegraphics[height=.47\textheight,angle=33.6]{rt-fig-cuttreevar-arx.pdf} }
\fi
\caption{$A^\times$ contains $\arcs($ $z_i^\ell z_i^{\ell+1}|z_i^{\ell+2}, ~ \bar z_i^{ell'} \bar z_i^{\ell'+1}|\bar z_i^{\ell'+2})$, where either both or neither of $z_i^\ell,\bar z_i^{\ell'}$ equal $b_i$.
\label{fig:cuttreevar}
}
  \end{subfigure}
\ifwabi
\hskip .45cm
  \begin{subfigure}[t]{.3\linewidth}
\else
\hskip .35cm
  \begin{subfigure}[t]{.28\linewidth}
\fi
    \centering
\ifwabi
\clipbox{7.05cm 1.05cm 5.85cm .55cm}{   \includegraphics[height=.7\textheight,angle=32.75]{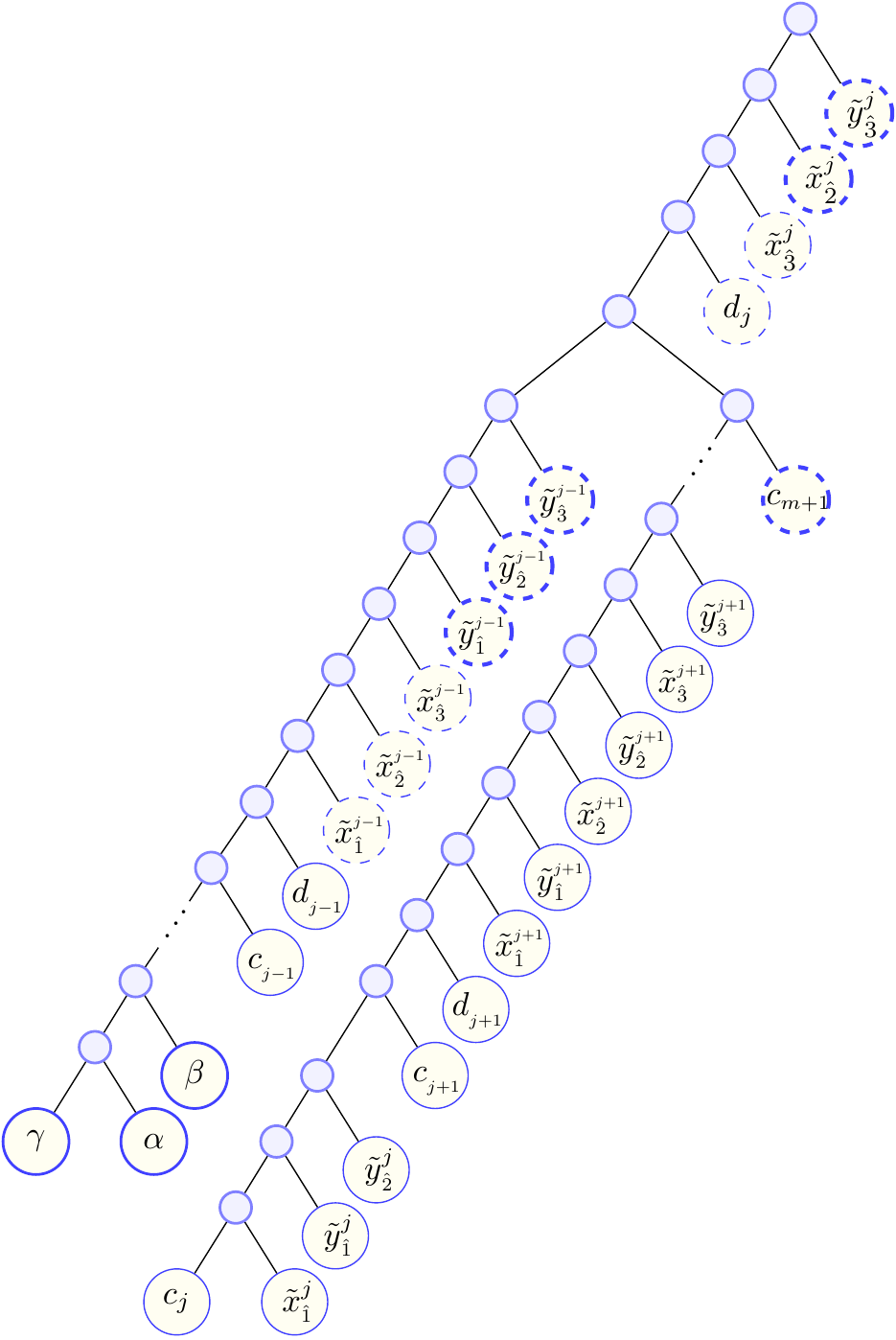} }
\else
\clipbox{7.5cm 1.15cm 6.25cm .5cm}{   \includegraphics[height=.73\textheight,angle=31.75]{rt-fig-cuttreeclause-arx.pdf} }
\fi
\caption{$A^\times$ contains three arcs lying with clause $C_j$'s gadget.
\label{fig:cuttreeclause}
}
  \end{subfigure}
 }
%
%
\caption{Trees resulting from triples corresponding to an acyclic arc set $A' \subseteq A$ in $H$ that (a) do or (b,c,d) do not contain a path $P$ from $\alpha\beta$ to $c_{m+1}\gamma$, for different cases of $A'$ and $A^\times = A-A'$. Leaves corresponding to isolated A-nodes in the tree's Ahograph are shown with thick dashed borders, and the leaves corresponding to A-nodes that {\em become} isolated when those isolated A-nodes are removed are shown with thin dashed borders. In (b,c,d), leaves within the two major subtrees are also shown thus, with respect to those subtrees' Ahographs.
}
\end{figure}

%
%
%

Recalling the construction of $H$, there are three types of places where the absent cross-arcs $A^\times$ could be located: within a clause gadget, within a variable gadget, or elsewhere, i.e., {\em forced arcs} (viz., connecting arcs $a_1,...,a_4$ or arcs with tail of the form $c_j c_{j+1}$ following a clause $C_j$'s gadget). There is one special subcase, which we give a name to.

\begin{definition}
We call $A^\times$ {\em degenerate} if $A^\times$ lies within a variable $x_i$'s gadget, $|A^\times|=2$, and exactly one of its members has the form $\arc(b_ib'_i|\tilde x_i^1)$. (Its other member must by definition lie within the $x_i$ gadget's opposite side.) 
\end{definition}

We deal with all cases besides an degenerate $A^\times$ in the following lemma.

\begin{lemma}\label{lem:treeexs}
%
Let $R'$ be consistent.
Suppose there is no path $P \subseteq A'$ from $\alpha\beta$ to $c_{m+1}\gamma$, and that $A^\times$ is non-degenerate.
%
Then $R'$ does not entail $\alpha\beta|\gamma$.
\end{lemma}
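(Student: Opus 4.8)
The plan is to recast ``$R'$ does not entail $\alpha\beta|\gamma$'' as a consistency statement. Since $R'$ is consistent, by the definition of entailment for consistent sets it suffices to exhibit a single tree in $\langle R'\rangle$ that fails to display $\alpha\beta|\gamma$. By the trichotomy observation (exactly one of $\alpha\beta|\gamma$, $\alpha\gamma|\beta$, $\beta\gamma|\alpha$ holds in any tree on $\{\alpha,\beta,\gamma\}$), any tree displaying the triple $\gamma\alpha|\beta$ automatically fails to display $\alpha\beta|\gamma$. Hence it is enough to prove that $R'' = R' \cup \{\gamma\alpha|\beta\} = \triples(A' \cup \{\arc(\gamma\alpha|\beta)\})$ is \emph{consistent}; this is exactly the triple set corresponding to the augmented hypergraph $H^\gamma$ introduced above, whose Ahograph is $G^\gamma$. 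A tree displaying $R''$ displays all of $R'$ yet not $\alpha\beta|\gamma$, establishing the lemma.

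To prove $R''$ consistent I would run Aho et al.'s BUILD algorithm on $R''$ and show, as in the proof of Lemma~\ref{lem:consistR}, that the Ahograph is disconnected at every recursive iteration. The new triple $\gamma\alpha|\beta$ contributes only the A-edge $\{\gamma,\alpha\}$ with label $\beta$; since $\gamma$ was an isolated A-node in $[R',L(R')]$ (nothing in $R$ places $\gamma$ on a LHS), this edge merely attaches $\gamma$ to the source side of the graph through $\alpha$. The key structural fact is that, because there is no $\alpha\beta{-}c_{m+1}\gamma$ path in $A'$, the missing arcs $A^\times = A - A'$ form a source--sink cut, which I would translate into a disconnection of $G^\gamma$ into a source-side component (containing $\alpha,\beta,\gamma$ and the gadget A-nodes reachable from $\alpha\beta$) and a sink-side component (containing $c_{m+1}$ and the rest). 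Adding $\{\gamma,\alpha\}:\beta$ does not bridge these two components, so at the top level $G^\gamma$ has (at least) two pieces, which become the two major subtrees of the constructed tree.

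I would then case-split on the location of the cut $A^\times$ into the three non-degenerate possibilities and in each case build the tree explicitly: a forced connecting arc (Fig.~\ref{fig:cuttreeforced}), a pair of arcs straddling the two sides of a variable gadget (Fig.~\ref{fig:cuttreevar}), and three arcs inside a single clause gadget (Fig.~\ref{fig:cuttreeclause}). Within each of the two components I would re-run the leaf-peeling argument of Lemma~\ref{lem:consistR}: invoking the maximality of $R'$, each side-subgraph is a forest whose A-nodes isolate one at a time under BUILD's recursion, so no connected Ahograph is ever encountered. The augmenting edge $\{\gamma,\alpha\}:\beta$ only inserts $\gamma$ as an additional leaf hanging off $\alpha$ on the source side, which does not interfere with this peeling.

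The main obstacle is verifying that the cut $A^\times$ really does sever \emph{every} potential cycle in the Ahograph, and cleanly. When a variable gadget contributes both of its sides, or a clause gadget contributes several witness paths, the corresponding Ahograph fragment is not a tree but contains the connecting triangles and long cycles indicated in Fig.~\ref{fig:ahoincompgr}; the delicate point is that a cut of the required non-degenerate form removes exactly the A-edges needed to break these cycles and separate $b_i$ from $b_{i+1}$ (resp.\ $c_j$ from $c_{j+1}$) on the relevant side, so that both resulting pieces are again forests amenable to peeling. This is precisely where the non-degeneracy hypothesis is used: a degenerate $A^\times$ (two arcs in one variable gadget, one of them of the form $\arc(b_ib_i'|\tilde x_i^1)$) fails to disconnect the two sides of that gadget and hence leaves $G^\gamma$ connected, so that sub-case must be excluded here and treated separately in Lemma~\ref{lem:degeneratecutpath}.
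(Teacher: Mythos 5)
Your proposal follows essentially the same route as the paper's own proof: reduce non-entailment to the consistency of $R' \cup \{\gamma\alpha|\beta\}$ (via the trichotomy), run BUILD on the augmented Ahograph $G^\gamma$, and case-split on whether the cut $A^\times$ sits on a forced arc, straddles the two sides of a variable gadget, or severs all witness paths of a clause gadget, with the same leaf-peeling argument inside each resulting component. You also correctly identify why non-degeneracy is needed (a degenerate cut removes only one of the two parallel $\{b_i,b'_i\}$ A-edges and so fails to disconnect the Ahograph), which is exactly the role it plays in the paper.
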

\begin{proof}
%

First, suppose $A^\times$ contains any of the forced arcs, say $a_\times$, corresponding to a triple $p_\times q_\times|o_\times$. 
Now, consider the effect of the corresponding A-edge $\{p_\times,q_\times\}$'s absence from $G^\gamma$: $p_\times$ and $q_\times$ must lie in different components of it, since (observe) the only multi-edges in $G^\gamma$ are $\{b_i,b'_i\}$ for $i \in [n]$ and $\{c_j,d_j\}$ for $j \in [m]$, i.e., those due to nodes in $H$ with out-degree 2. Let $C^\alpha$ be the component in $G^\gamma$ containing $\alpha$ and $C^{\bar\alpha}$ the the other, and let the leaves be named so that $C^\alpha$ contains $p_\times$ and $C^{\bar\alpha}$ contains $q_\times$.


Now, consider what happens when BUILD recurses on the triple and leaf sets corresponding to each of these components: in each case, BUILD will encounter a leaf set whose triples permit a linear ordering (similar to Table \ref{tbl:linord}), which therefore can be displayed in a (caterpillar) tree. In particular, in BUILD's recursive call for $[C^{\bar\alpha},V(C^{\bar\alpha})]$, $c_{m+1}$ will be an isolated node and the rest of the Ahograph will induce a DAG (directed acyclic graph) directed away from from $\gamma$; and in the recursive call for $[C^\alpha,V(C^\alpha)]$, $p_\times$ will be an isolated node and again the rest of the Ahograph will induce a DAG.
In each of these two calls, therefore, BUILD's execution on its isolated-A-node-and-DAG pair will behave similarly ({\em despite the fact} that the portions of this Ahograph corresponding to clause gadgets may contain A-nodes and A-edges corresponding to all three witness paths, and the portions corresponding to variable gadgets may contain A-nodes and A-edges corresponding to both sides) to its execution on $\triples(P)$ for the acyclic $P$ in Lemmas \ref{lem:consistR} and \ref{lem:isentailed} (see Fig.\ \ref{fig:ahograph}), with each iteration isolating one additional node, thus constructing a caterpillar tree. These two caterpillars will become the root's child subtrees in the resulting tree displaying $R' \cup \{\gamma\alpha|\beta\}$ (see Fig.\ \ref{fig:cuttreeforced}).

Second, suppose there does not exist a path though the series of variable gadgets, i.e., from $b_1b'_1$ to $b_{n+1}b'_{n+1}$, meaning there does not exist a path through {\em some} variable gadget, i.e., from $b_ib'_i$ to $b_{i+1}b'_{i+1}$ for some $i \in [n]$. Then $A^\times$ must contain an arcs from both sides of $x_i$'s gadget, say $\arcs(z_i^\ell,z_{i^\ell+1}|z_i^{\ell+2},~ \bar z_i^{\ell'},\bar z_i^{\ell'+1}|\bar z_i^{\ell'+2}))$, where (because $A^\times$ is nondegenerate) either both or neither of $z_i^{\ell},\bar z_i^{\ell'}$ equals $b_i$. The absence of the two corresponding A-edges means that the Ahograph $G'$ will be disconnected. During BUILD's recursive calls for these two components, the whole process will play out similarly to how it did in the previous case (see Fig.\ \ref{fig:cuttreevar}).


Third, suppose there does not exist a path through the series of clause gadgets, i.e., from $c_1d_1$ to $c_{m+1}d_{m+1}$. Since none of the forced arcs is missing, this means there does not exist a path through some clause gadget, i.e., from $c_jd_j$ to $c_{j+1}d_{j+1}$ for some $j \in [m]$. Then $A\times$ must contain an arc from each of the clause gadget's three witness paths (or similarly from {\em both} of them if the clause has only two literals).


Now there are several subcases to consider, but they all behave similarly. First, suppose $A^\times$ includes $c_jd_j$'s three outgoing arcs, each corresponding to a triple of the form $c_jd_j|\tilde x_{\hat w}^j$. Now, consider the effects of $c_jd_j$'s outgoing arcs' absence from $G^\gamma$: this means the three corresponding A-edges between $c_j$ and $d_j$ are absent, isolating $d_j$. Thus the initial input to BUILD consists of two components: isolated A-node $d_j$, and a DAG. In the recursive call for the DAG, the A-edge $\{c_{j-1},c_j\}$ labeled $d_j$ will no longer exist, and thus the corresponding Ahograph will consist of two components: one corresponding to the portion of the hypergraph from $\gamma$ through $C_{j-1}$'s gadget, and another corresponding top the portion from $C_j$'s gadget ({\em except for $d_j$}) though $c_{m+1}$. Now consider the recursive calls for these two components. Within the first component's recursive call, the three leaves of the form $\tilde y_{\hat w}^j$ will have been isolated, with the result that in subsequent calls the leaves of the form $\tilde x_{\hat w}^j$ will have been isolated, and so on, yielding a caterpillar tree. Similarly, $c_{m+1}$ will be isolated within the second recursive call, eventually yielding a second caterpillar tree. Then BUILD's final result will be a tree whose root's children are $d_j$ and a subtree whose two children are the roots of the two aforementioned caterpillar trees.

With three witness paths to cut, then there are  (up to isomorphism) either other subcases, which can all be verified through similar reasoning. The three cut arcs in any given case will be a subset from among the three $c_j \tilde y_{\hat w}^j$ nodes' outgoing arcs, the three $c_j \tilde x_{\hat w}^j$ nodes' outgoing arcs, and $c_jd_j$'s three outgoing arcs. Each cut arc outgoing from a $c_j \tilde y_{\hat w}^j$ node or a $c_j \tilde x_{\hat w}^j$ node will yield an isolated A-node. The initial Ahograph will therefore consist of either 1, 2, or 3 isolated A-nodes plus a DAG. Since it breaks ties arbitrarily, suppose BUILD always places isolated A-nodes as right children in the binary tree being constructed. Any time an A-node $\tilde y_{\hat w}^j$ is isolated (for some $w \in [3]$), in the following recursive call with $\tilde y_{\hat w}^j$ removed, $\tilde x_{\hat w}^j$ will be isolated. And in the recursive call following the isolation and removal of $\tilde x_{\hat w}^j$ for all $w=1,2,3$, $d_j$ will be isolated and removed.

At that point, the remaining Ahograph will consist of two DAGs $C^\alpha$ and $C^{\bar\alpha}$, where $C^\alpha$ contains $\alpha$ and $C^{\bar\alpha}$ does not. $C^{\bar\alpha}$ will also contain $c_j$ and any other A-nodes corresponding to $C_j$'s gadget that have not already been isolated and removed. In particular, $A^\times$ containing a cut arc outgoing from a $c_j \tilde x_{\hat w}^j$ node will mean $C^\alpha$ contains $y_{\hat w}^j$, and $A^\times$ containing a cut arc incoming to a $c_j \tilde x_{\hat w}^j$ will mean $C^\alpha$ contains $x_{\hat w}^j$ and $y_{\hat w}^j$.

The resulting binary tree (see Fig.\ \ref{fig:cuttreeclause}) will consist of a caterpillar region beginning at the root containing the 1-7 A-nodes isolated before the disconnection of the Ahograph into $C^\alpha$ and $C^{\bar\alpha}$, followed by a node whose two child subtrees are caterpillar trees resulting from $C^\alpha$ and $C^{\bar\alpha}$.
%
\end{proof}

The problematic situation is when exactly {\em one} of the two arcs is outgoing from $b_ib'_i$. In this case, their absence deletes only one of the Ahograph's two A-edges between the pair $\{b_i,b'_i\}$, which does {\em not} disconnect the graph, meaning BUILD will fail.

We have been arguing that if a consistent $R'$ entails $\alpha\beta|\gamma$ then $\arcs(R')$ must contain an acyclic path from $\alpha\beta$ to $c_{m+1}\gamma$. Now we refine this to a slightly weaker (yet strong enough) implication: if a consistent $R'$ entails $\alpha\beta|\gamma$, then a slightly different consistent $R^+$ will too, and an acyclic path must exist within $\arcs(R^+)$.


Taking the contrapositive, we rewrite the previous lemma as:
\begin{corollary}
Let $R'$ be consistent.
Suppose $R' \vdash \alpha\beta|\gamma$ and that $A'$ contains no path from $\alpha\beta$ to $c_{m+1}\gamma$.
Then $A^\times$ is degenerate.
\end{corollary}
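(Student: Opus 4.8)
The plan is to derive this corollary as the straightforward contrapositive of Lemma~\ref{lem:treeexs}, introducing no new combinatorial content. Reading that lemma with ``$R'$ consistent'' as a standing hypothesis, it states the implication
\[
\bigl(\text{$A'$ contains no path from $\alpha\beta$ to $c_{m+1}\gamma$}\bigr) \;\wedge\; \bigl(A^\times \text{ is non-degenerate}\bigr) \;\Longrightarrow\; R' \not\vdash \alpha\beta|\gamma .
\]
First I would contrapose. Negating the consequent yields $R' \vdash \alpha\beta|\gamma$, while negating the conjoined antecedent yields the disjunction that either $A'$ does contain such a path or $A^\times$ is degenerate. Hence, still assuming $R'$ consistent,
\[
R' \vdash \alpha\beta|\gamma \;\Longrightarrow\; \bigl(\text{$A'$ contains a path}\bigr) \;\vee\; \bigl(A^\times \text{ is degenerate}\bigr).
\]

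Next I would apply the corollary's additional hypothesis that $A'$ contains no path from $\alpha\beta$ to $c_{m+1}\gamma$. This eliminates the first disjunct, leaving exactly that $A^\times$ is degenerate, which is the desired conclusion.

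I expect no genuine obstacle beyond careful logical bookkeeping. One must keep ``$R'$ consistent'' as a maintained hypothesis rather than folding it into the negated antecedent, and one should recall that Lemma~\ref{lem:treeexs} was established under the w.l.o.g.\ reduction to a maximal consistent $R'$ (justified in the surrounding discussion), which is what lets $A^\times = A - A'$ be treated as a structured source/sink cut and makes the dichotomy between degenerate and non-degenerate cuts exhaustive. Since the lemma's proof already disposed of every non-degenerate location of that cut---a forced arc, a two-sided variable-gadget cut, or a three-witness-path clause-gadget cut---the degenerate configuration (namely $|A^\times|=2$, confined to a single variable gadget, with exactly one member of the form $\arc(b_ib'_i\,|\,\tilde x_i^1)$) is the only possibility that survives once both a path is ruled out and entailment is assumed.
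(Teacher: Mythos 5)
Your proposal is correct and matches the paper exactly: the paper introduces this corollary with the single phrase ``Taking the contrapositive, we rewrite the previous lemma as,'' which is precisely the contraposition-plus-disjunct-elimination you carry out. The additional bookkeeping you note (keeping consistency as a standing hypothesis and relying on the w.l.o.g.\ maximality of $R'$) is consistent with the surrounding discussion and adds nothing that conflicts with the paper's intent.
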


\begin{lemma}\label{lem:degeneratecutpath}
Let $R'$ be consistent.
Suppose $R' \vdash \alpha\beta|\gamma$ and that $A'$ contains no path from $\alpha\beta$ to $c_{m+1}\gamma$.
%
Then there exists an acyclic path $P \subseteq A' \cup \{a_\times\}$.
\end{lemma}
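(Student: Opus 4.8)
The plan is to invoke the preceding corollary to pin down the shape of $A^\times$ and then repair the single arc whose absence blocks every path. By that corollary $A^\times$ is degenerate: it is exactly two arcs inside one variable $x_i$'s gadget, precisely one of which has the form $\arc(b_ib'_i\mid \tilde x_i^1)$. Assume without loss of generality that this is the first arc of $x_i$'s \emph{positive} side, so that I may take $a_\times = \arc(b_ib'_i\mid x_i^1)$ (the negative-side case being symmetric), while the other missing arc lies somewhere on $x_i$'s negative side. First I would note that adjoining $a_\times$ restores $x_i$'s entire positive side, and since every other variable gadget, every clause gadget, and all the connecting arcs survive intact in $A'$, the augmented set $A'\cup\{a_\times\}$ does admit $\alpha\beta$-to-$c_{m+1}\gamma$ paths, in particular ones routed through $x_i$'s positive side (the still-missing negative arc is never used).

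The substance is to choose such a path that is \emph{acyclic}. The only candidate back-arcs are those whose tail lies in a clause gadget and one of whose heads lies in a variable gadget, so (exactly as in the $(\Leftarrow)$ direction of the first lemma) a path is acyclic precisely when, for every clause, the witness path it traverses points into the side of its variable's gadget that the path does \emph{not} traverse. Reading ``positive side traversed'' as ``$x_i$ false'' and symmetrically for the other variables, this is exactly the condition that the induced truth assignment satisfies $F$, with $x_i$ pinned to false. I would therefore argue that such a satisfying assignment exists, using \emph{both} hypotheses: consistency of $R'$ guarantees, via Lemma~\ref{lem:cycpath} and the Ahograph picture of Figs.~\ref{fig:ahograph}--\ref{fig:ahocycle}, that the clause gadgets together with the surviving variable sides harbor no vicious cycle; while the entailment $R'\vdash\alpha\beta|\gamma$ --- which in this no-path, degenerate regime is equivalent to both $R'\cup\{\alpha\gamma|\beta\}$ and $R'\cup\{\beta\gamma|\alpha\}$ being inconsistent --- supplies the global coherence forcing a consistent side-and-witness-path selection. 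Concretely I would track BUILD on $\triples(A'\cup\{a_\times\})$, i.e., re-insert the dashed A-edge $\{b_i,b'_i\}:x_i^1$ of Fig.~\ref{fig:ahodiscordvar}, follow how the components split, and read off the required choices from the resulting caterpillar structure.

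Given the assignment, the path $P$ traverses the side dictated by each variable's truth value (the positive side at $x_i$) and, in each clause gadget, a witness path through the appearance of a true literal; by the satisfaction property every such witness path points to the opposite, untraversed side, so no back-arc arises and $P$ is acyclic. Since $P\subseteq A'\cup\{a_\times\}$ runs from $\alpha\beta$ to $c_{m+1}\gamma$, it is the desired path, and by Corollary~\ref{cor:consentails} the set $\triples(P)$ is then the consistent $R^+$ still entailing $\alpha\beta|\gamma$.

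The main obstacle is the middle step: extracting a genuine satisfying assignment, and hence an acyclic routing, from consistency together with entailment, despite $A'$ possessing no $\alpha\beta$-to-$c_{m+1}\gamma$ path whatsoever. The delicacy is exactly the degeneracy --- deleting only the positive side's first arc leaves the A-edge $\{b_i,b'_i\}$ intact through the surviving negative first arc, so the gap at $x_i$ is invisible to the Ahograph's connectivity and cannot be exposed by a local cut argument of the kind used in Lemma~\ref{lem:treeexs}. The acyclicity of the repaired path must instead be certified \emph{globally}, through the interaction of the entailment derivation with the gadget structure; establishing that interaction rigorously is where I expect the real work to lie.
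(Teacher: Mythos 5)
Your setup is right --- you correctly invoke the corollary to conclude that $A^\times$ is degenerate, identify $a_\times = \arc(b_ib'_i|\tilde x_i^1)$ as the single arc to restore, and observe that $A' \cup \{a_\times\}$ then contains $\alpha\beta$-to-$c_{m+1}\gamma$ paths routed through the repaired side of $x_i$'s gadget. But the heart of the lemma --- certifying that some such path is \emph{acyclic} --- is exactly the step you defer, and your proposed route to it is a detour that you do not complete. Translating acyclicity into ``$F$ has a satisfying assignment with $x_i$ pinned to false'' is, by the first lemma, essentially a restatement of the conclusion to be proved, and your plan to extract that assignment ``from consistency together with entailment'' is never made concrete; you acknowledge this yourself in your final paragraph. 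As written, the proposal establishes existence of a path in $A' \cup \{a_\times\}$ but not acyclicity, so the lemma is not proved.

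The paper closes this gap without ever mentioning truth assignments. It argues directly that $R^+ = \triples(A' \cup \{a_\times\})$ is consistent: the only way BUILD can fail on a subset of $R$ is the local configuration of Fig.~\ref{fig:ahocycle}, where after the A-edge $\{c_j,\tilde y_{\hat i_j}^j\}{:}c_{j+1}$ is removed the A-node $\tilde y_{\hat i_j}^j$ fails to become isolated; and since $a_\times$ contributes only a \emph{parallel} A-edge on the pair $\{b_i,b'_i\}$ (the negative-side first arc already supplies an A-edge there, as in Fig.~\ref{fig:ahodiscordvar}), it cannot newly connect any $\tilde y_{\hat i_j}^j$, so any such failure for $R^+$ would already have occurred for the consistent $R'$. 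Consistency of $R^+$ then forces $A' \cup \{a_\times\}$ to be acyclic via (the contrapositive of) Lemma~\ref{lem:cycpath}, and in particular the path $P^+$ it contains is acyclic. If you want to salvage your approach, this is the mechanism you would need: acyclicity comes from consistency of the \emph{augmented} set through the Ahograph failure-mode analysis, not from independently manufacturing a satisfying assignment.
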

\begin{proof}
%
%
The corollary implies that $A^\times$ is degenerate.

Let $H^+$ be the hypergraph obtained by adding $a_\times$ to $H^*$, which by construction contains a path (say, $P^+$) from $\alpha\beta$ to $c_{m+1}\gamma$.
Let $R^+ = \triples(A^+)$, where $A^+ = A' \cup \{a_\times\}$ is $H^+$'s arc set.

We claim that $R^+$ is consistent. We know that the point at which BUILD would potentially fail when executed on a subset of $R$ is in a call when an A-edge of the form $\{c_j,\tilde y_{\hat i_j}^j\}{:}c_{j+1}$ has been removed and the Ahograph is not disconnected, because $\tilde y_{\hat i_j}^j$ is not isolated (see Fig.\ \ref{fig:ahocycle}). But if this occurs when BUILD is executed on $R^+$, it would also have occurred when BUILD is executed on $R^*$ because $\tilde y_{\hat i_j}^j$ would still be connected in that scenario as well (see Fig.\ \ref{fig:ahodiscordvar}).
Therefore $A^+$, and in particular $P^+$, is acyclic.
%
%
%
%
\end{proof}

This implies:

\begin{corollary}[$\Rightarrow$]\label{cor:ispath}
If there is a consistent $R'$ entailing $\alpha\beta|\gamma$ then there exists an acyclic path $P$.
\end{corollary}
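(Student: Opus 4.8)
The plan is to assemble this corollary from the two preceding directions of the cut analysis via a short case split, after first reducing the given witness to a maximal one. Suppose $R'$ is consistent and $R' \vdash \alpha\beta|\gamma$. I would begin by recording that entailment is monotone on \emph{consistent} sets: if $R_1 \subseteq R_2$ are both consistent then $\langle R_2\rangle \subseteq \langle R_1\rangle$, so $\overline{R_1} = \bigcap_{T\in\langle R_1\rangle} r(T) \subseteq \bigcap_{T\in\langle R_2\rangle} r(T) = \overline{R_2}$. Hence I can greedily add triples of $R$ to $R'$, keeping the set consistent, until reaching a set $R^*$ that is maximal in the sense used just before Lemma \ref{lem:treeexs} (adding any further triple of $R$ either destroys consistency or creates an $\alpha\beta$--$c_{m+1}\gamma$ path). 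Since the closure only grows along the way, $R^* \vdash \alpha\beta|\gamma$ still holds; this is what justifies the standing ``w.l.o.g.'' maximality assumption and lets the later lemmas apply to $R^*$ verbatim.

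Next I would split on whether $A^* = \arcs(R^*)$ already contains a path from $\alpha\beta$ to $c_{m+1}\gamma$. If it does, then because $R^* = \triples(A^*)$ is consistent, the contrapositive of Lemma \ref{lem:cycpath} forbids that path from being cyclic, so it is already an acyclic path and we are done. If $A^*$ contains no such path, then $R^*$ sits in exactly the hypothesis of Lemma \ref{lem:degeneratecutpath} (consistent, entailing $\alpha\beta|\gamma$, and containing no path), which produces an acyclic path $P \subseteq A^* \cup \{a_\times\}$. In either case an acyclic path exists, which is the claim.

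The case split itself is routine; the content all lives in the second case, and it is already discharged by Lemma \ref{lem:degeneratecutpath} (which in turn relies on the contrapositive of Lemma \ref{lem:treeexs} to force the cut $A^\times$ to be degenerate). The main subtlety I would be careful about is therefore the reduction to a maximal witness: I must ensure that (i) the greedy extension preserves \emph{both} consistency and entailment, so that the hypotheses of Lemmas \ref{lem:cycpath} and \ref{lem:degeneratecutpath} genuinely hold for $R^*$ rather than only for the original $R'$, and (ii) the two cases are exhaustive and that, in the ``has a path'' case, consistency alone suffices to upgrade the path to an \emph{acyclic} one. Once the monotonicity observation is in place both points are immediate, so beyond this bookkeeping I do not anticipate any real obstacle.
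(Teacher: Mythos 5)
Your proposal is correct and follows essentially the same route as the paper, which leaves this corollary as an implicit assembly of Lemmas \ref{lem:cycpath}, \ref{lem:treeexs} and \ref{lem:degeneratecutpath}: reduce to a maximal consistent witness, then split on whether $\arcs(R^*)$ already contains an $\alpha\beta$--$c_{m+1}\gamma$ path (acyclic by the contrapositive of Lemma \ref{lem:cycpath}) or not (handled by Lemma \ref{lem:degeneratecutpath}). Your explicit monotonicity argument for why the maximality assumption is without loss of generality is a detail the paper asserts but does not spell out, and it is correct.
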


Combining the Corollary \ref{cor:consentails} and \ref{cor:ispath} with Theorem \ref{thm:apdh}, we conclude:

\begin{theorem}
\entailprob\:is NP-Complete.
\end{theorem}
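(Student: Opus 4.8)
The plan is to assemble the pieces already established into a polynomial-time many-one reduction from \textsc{3SAT}, since membership in NP was already argued at the start of the proof section (guess a subset $R' \subset R$, then run Aho et al.'s BUILD algorithm to certify both that $R'$ is consistent and that $R' \vdash t$, all in polynomial time). For hardness, given a \textsc{3SAT} formula $F$ on variables $x_1,\dots,x_n$ and clauses $C_1,\dots,C_m$, I would invoke the construction of Section 3 to produce the leaf set $L = L_1 \cup L_2 \cup L_3 \cup L_4$ and the triple set $R$, together with the target triple $t = \alpha\beta|\gamma$. The first thing to verify is that this construction runs in time polynomial in $|F|$: the four leaf sublists contribute $O(nm)$ leaves, and the variable, clause, and connecting gadgets each contribute only a constant number of triples per variable appearance, so both $|L|$ and $|R|$ are polynomial and the output $(R,t)$ is computable in polynomial time.

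With the instance in hand, the core of the argument is the chain of equivalences
\[
F \text{ satisfiable} \iff \exists\,\text{acyclic path } P \text{ from } \alpha\beta \text{ to } c_{m+1}\gamma \iff \exists\,\text{consistent } R' \subset R \text{ with } R' \vdash \alpha\beta|\gamma.
\]
The first biconditional is exactly the content of the lemma that establishes Theorem \ref{thm:apdh}, which translates satisfying assignments into acyclic $\alpha\beta$-to-$c_{m+1}\gamma$ paths and back. The second biconditional splits into its two directions: the direction ($\Leftarrow$) that an acyclic path yields a consistent entailing subset is Corollary \ref{cor:consentails}, obtained by showing $\triples(P)$ is consistent and entails $\alpha\beta|\gamma$; the reverse direction ($\Rightarrow$), that a consistent entailing subset forces an acyclic path, is Corollary \ref{cor:ispath}. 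Composing the two biconditionals yields that $F$ is satisfiable iff $R \vdash \alpha\beta|\gamma$, i.e.\ iff the constructed instance $(R,t)$ is a yes-instance of \entailprob.

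Combining this reduction with membership in NP completes the proof. The genuinely delicate ingredient is the reverse direction of the second biconditional: I would expect the main obstacle to be ruling out that a consistent entailing subset $R'$ could exist without any acyclic path at all, which is settled by the cut-and-tree-construction analysis (Lemmas \ref{lem:cycpath} and \ref{lem:treeexs}, and especially the degenerate case in Lemma \ref{lem:degeneratecutpath}). In assembling the final theorem these are invoked as black boxes, so the only residual care needed is to confirm that $t = \alpha\beta|\gamma$ is the single, explicitly named triple of the construction and that every equivalence above refers to this same triple, so that the reduction maps yes-instances to yes-instances and no-instances to no-instances without gaps.
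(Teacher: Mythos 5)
Your proposal matches the paper's own argument: NP membership via guess-and-verify with BUILD, and hardness by composing Theorem \ref{thm:apdh} (satisfiability $\Leftrightarrow$ acyclic path) with Corollaries \ref{cor:consentails} and \ref{cor:ispath} (acyclic path $\Leftrightarrow$ consistent entailing subset), all applied to the single target triple $\alpha\beta|\gamma$. The added observation that the construction has polynomial size is a correct and harmless supplement that the paper leaves implicit.
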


And because computing the closure reduces to deciding whether $R \vdash t$ for $O(|L|^3)$ triples $t$, we also have:

\begin{corollary}
\closureprob\:is NP-hard.
\end{corollary}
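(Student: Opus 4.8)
The plan is to derive the NP-hardness of \closureprob\ directly from the just-established NP-completeness of \entailprob\ by exhibiting a polynomial-time Turing (Cook) reduction from the latter to the former. Since \closureprob\ is a function problem---its output is the entire set $\overline R$ rather than a single bit---NP-hardness here means that a hypothetical polynomial-time algorithm computing $\overline R$ would yield a polynomial-time algorithm for the NP-complete decision problem \entailprob, and hence would force $\mathrm{P}=\mathrm{NP}$.

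First I would set up the reduction. Given an instance $(R,t)$ of \entailprob, I would invoke the hypothetical \closureprob\ solver on the triple set $R$ alone to obtain its closure $\overline R$, and then simply test whether the query triple $t$ belongs to $\overline R$, answering ``yes'' to \entailprob\ exactly when $t \in \overline R$. Correctness is immediate from the definitions: by construction $\overline R = \{t : R \vdash t\}$, so $R \vdash t$ holds if and only if $t \in \overline R$.

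The only point requiring care is the polynomial-time bound, which is exactly where the earlier Observation does the work. That Observation records $R \subseteq \overline R \subseteq \binom{L}{2} \times L$, whence $|\overline R| = O(|L|^3)$. Thus the output of the \closureprob\ solver has size polynomial in the input, so enumerating it and performing the membership test $t \in \overline R$ both run in time polynomial in $|R|$ and $|t|$. Consequently the entire reduction---one oracle call followed by a polynomial-time membership check---runs in polynomial time, so a polynomial-time algorithm for \closureprob\ would solve the NP-complete problem \entailprob\ in polynomial time.

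I do not anticipate a substantive obstacle, as the content of the result resides entirely in the preceding NP-completeness of \entailprob, leaving this final step a routine Cook reduction. The one subtlety worth flagging is that, because \closureprob\ asks for the whole closure set rather than a single yes/no answer, the appropriate notion of hardness is Turing- rather than many-one reducibility; the polynomial bound $|\overline R| = O(|L|^3)$ is precisely what guarantees that a polynomial-time closure algorithm could serve as a polynomial-time subroutine for deciding entailment.
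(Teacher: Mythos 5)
Your proof is correct and matches the paper's argument, which likewise rests on the observation that $|\overline R| = O(|L|^3)$ makes computing the closure equivalent to answering the \entailprob\ decision question for polynomially many triples. You are in fact slightly more explicit than the paper about the direction needed for hardness---the single oracle call to the closure solver followed by a membership test $t \in \overline R$---but this is exactly the reduction the paper's one-line justification intends.
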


%


\section*{Acknowledgements}

This work was supported in part by NSF award INSPIRE-1547205, and by the Sloan Foundation via a CUNY Junior Faculty Research Award.
We thank Megan Owen for introducing the \ncnod\ problem to us.

\bibliographystyle{abbrv}
\bibliography{bib}

\end{document}